\newtheorem{theorem}{Theorem}[section] 
\newtheorem{lemma}{Lemma}[section]
\newtheorem{corollary}{Corollary}[section]
\newtheorem{definition}{Definition}[section] 
\newtheorem{example}{Example}[section] 
\title{Proving the NP-completeness of optimal moral graph triangulation}
\author{ {\bf Yang Li } \\
Faculty of Information Technology \\
Monash University\\
Clayton, Australia \\
\And
{\bf Lloyd Allison}  \\
Faculty of Information Technology \\
Monash University\\
Clayton, Australia \\
\And
{\bf Kevin Korb}   \\
Faculty of Information Technology \\
Monash University\\
Clayton, Australia \\
}
\begin{document}

\maketitle

\begin{abstract}
Moral graphs were introduced in the 1980s as an intermediate step when transforming a Bayesian network to a junction tree, on which exact belief propagation can be efficiently done. The moral graph of a Bayesian network can be trivially obtained by connecting non-adjacent parents for each node in the Bayesian network and dropping the direction of each edge. Perhaps because the moralization process looks simple, there has been little attention on the properties of moral graphs and their impact in belief propagation on Bayesian networks. This paper addresses the mistaken claim that it has been previously proved that optimal moral graph triangulation with the constraints of minimum fill-in, treewidth or total states is NP-complete. The problems are in fact NP-complete, but they have not previously been proved. We now prove these. 
\end{abstract}

\section{Introduction}
One way to conduct an exact inference on a given Bayesian network (BN) is to transform it to a tree-like structure, called \textit{junction tree} (JT), and conduct inference on a junction tree instead. The process of obtaining a JT from a BN consists of moralization, triangulation and tree decomposition. \textit{Moralization} was introduced by \cite{lauritzen1988local} as connecting non-adjacent parents for each node in the BN and dropping all the directions. \textit{Tree decomposition} maps a graph $G=(V,E)$ to a tree $T$, in which each tree node $t$ is a subset $V_t$ of vertices in $V$ and satisfying the following three conditions: (1) $\cup_{t \in T} V_t = V$; (2) for each edge $e \in E$ there exists a tree node $t \in T$ s.t. $V(e) \subseteq t$; (3) if $V_{t_i} \cap V_{t_k} = I$ then $I \subseteq V_{t_j}$ for each $t_j$ that appears on the path between $t_i$ and $t_k$. 

Any graph has a tree decomposition, but not all decomposed trees are JTs. A JT is a tree decomposition s.t. each tree node is a complete subgraph. To ensure every DAG can be transformed to a JT, which represents a family of distributions that contains the distribution of the given BN, it is necessary to triangulate the DAG's moral graph. Here, \textit{triangulation} finds a set of fill-in edges, whose addition makes a graph triangulated. When the marginal distribution of individual variables is of interest, the JT algorithm sums over the other variables in a tree node. Hence, the complexity of the JT algorithm is exponential in the size of a tree node. 

Generally, a DAG may have more than one way of being triangulated. Then an optimal triangulation can be defined in terms of the  following three constraints:
\begin{itemize}
    \item minimum fill-in: deciding whether a graph can be triangulated by at most $\lambda$ fill in edges \footnote{Originally, the minimum fill-in problem for graphs was presented and proved by \cite{yannakakis1981computing} as an optimization problem. But it can be revised to a decision problem, for which the original proof also works.}; 
    \item treewidth: deciding whether a graph has treewidth at most $\omega$;  
    \item total states: deciding whether a graph can be triangulated s.t. the total number of states is at most $\delta$ when summing over all maximal cliques in the triangulated graph.
\end{itemize}
The minimum fill-in problem is appealing because the treewidth usually increases exponentially in the number of fill-in edges. The total states problem incorporates both clique size and the number of states per variable, so is also essential to the complexity of the JT algorithm. The minimum fill-in and the treewidth problems for graphs were proved to be NP-complete by \cite{yannakakis1981computing} and \cite{arnborg1987complexity}, respectively. Each proof stated a polynomial time reduction from a known NP-complete problem to optimally triangulating specially constructed graphs that are not moral. These reductions are sufficient to show the NP-completeness of the minimum fill-in and treewidth problems for graphs, but the difficulty of these problems do not automatically carry over to moral graphs. These works, however, were cited in \cite{lauritzen1988local} (Section 6 and discussion with Augustin Kong) during the discussion of triangulating moral graphs. So it gives the impression that the minimum fill-in and treewidth problems for moral graphs were proved to be NP-complete. Based on a similar reduction, \cite{wen1990optimal} presented a proof of the NP-completeness of the total states problem for moral graphs. The proof is insufficient to support his claim, for the same reason above. Since then, all three works have being inaccurately cited as proving the NP-completeness of optimally triangulating moral graphs, e.g., \cite{kjaerulff1990triangulation, larranaga1997decomposing, amir2001efficient, flores2007review, ottosen2012all, li2017extended} etc. 

This paper proves that the minimum fill-in, treewidth or total states problems for moral graphs are indeed NP-complete. It applies an additional step to each polynomial transformation to ensure the built graphs are moral after revision. Section \ref{sec:pre} introduces equivalent properties to graph morality and the necessary concepts for the proofs. Section \ref{sec:triangulation} demonstrates why the original constructions cannot produce moral graphs and presents a fix to each problem. 

\section{Preliminary}
\label{sec:pre}
Throughout this report, unless mentioned otherwise, all graphs are assumed to be simple, connected and undirected. $G=(V,E)$ is used to denote a graph, whose vertex set is $V$ and edge set is $E$. For $uv \in E$, the subtraction $G-u$ denotes the induced subgraph $G[V \setminus \{u\}]$ and $G-uv$ denotes the subgraph $(V,E\setminus \{uv\})$.

\begin{definition}
The \textbf{deficiency} of a vertex $x$ in a graph $G=(V,E)$ is $D_G(x)=\{uv \notin E \mid u, v \in N_G(x)\}$.
\end{definition}

\begin{definition}
A vertex $x$ in a graph $G$ is \textbf{simplicial} if $D_G(x)=\emptyset$. 
\end{definition}

\begin{definition}
An \textbf{ordering} of a graph $G=(V,E)$ is a bijection $\alpha: \{1, \dots, |V|\} \leftrightarrow V$. 
\end{definition}

For convenience, let $\alpha(0)=\emptyset$. Then define the subgraph $G^i=G-\{\alpha(0), \dots, \alpha(i)\}$ for $i \in [0,|V|]$. It is called the \textit{elimination graph} (w.r.t. $\alpha$) if for each $j \in [1,i]$ the node $\alpha(j)$ is simplicial in $G^{j-1}$.
\begin{definition}
The \textbf{triangulation} of a graph $G$ w.r.t. an ordering $\alpha$ is the set of edges $H_G(\alpha)=\{D_{G^{i-1}}(\alpha(i)) \mid i \in [1,|V|]\}$. 
\end{definition}
The above definition implies that in the triangulated graph $F=G+H_G(\alpha)$, the node $\alpha(i)$ is simplicial in $F^{i-1}$ for each $i \in [1,|V|]$. 

To distinguish them from undirected edges, the directed edge from $u$ to $v$ is denoted by $\overrightarrow{uv}$. The \textit{skeleton} of a (partially) directed graph is the undirected graph obtained by dropping the direction of each directed edges. In a directed graph, $u$ is a \textit{parent} of $v$, denoted by $u \in P_G(v)$, if there is a directed edge $\overrightarrow{uv}$. 

\begin{definition}
\label{def:moral}
The \textbf{moral graph} of a directed acyclic graph $G=(V,E)$ is the skeleton of the graph $(V,E\cup F)$, where $F=\{uv \mid u,v \in P_G(x) \text{ s.t. } \{\overrightarrow{uv}, \overrightarrow{vu}\}\cap E = \emptyset\}$.
\end{definition}

\begin{figure}
\centering
\subfigure[]{
\begin{tikzpicture}[scale=0.9]
\label{fg:exp_dag}
\begin{scope}[>={Stealth[black]},every edge/.style={draw=black}]
    \node (A) at (0,0) {$v_1$};
    \node (B) at (1,0) {$v_2$};
    \node (C) at (0,-1) {$v_3$};
    \node (D) at (1,-1) {$v_4$};
    \node (E) at (0.5,-2) {$v_5$};
    \path [->] (A) edge (B);
    \path [->] (A) edge (C);
    \path [->] (B) edge (D);
    \path [->] (C) edge (E);
    \path [->] (D) edge (E);
\end{scope}
\end{tikzpicture}
}\hspace{0.5cm}
\subfigure[]{
\begin{tikzpicture}[scale=0.9]
\label{fg:exp_moral}
\begin{scope}[>={Stealth[black]},every edge/.style={draw=black}]
    \node (F) at (3,0) {$v_1$};
    \node (G) at (4,0) {$v_2$};
    \node (H) at (3,-1) {$v_3$};
    \node (I) at (4,-1) {$v_4$};
    \node (J) at (3.5,-2) {$v_5$};
    \path [-] (F) edge (G);
    \path [-] (F) edge (H);
    \path [-] (G) edge (I);
    \path [-] (H) edge (I);
    \path [-] (H) edge (J);
    \path [-] (I) edge (J);
\end{scope}
\end{tikzpicture}
}\hspace{0.5cm}
\subfigure[]{
\begin{tikzpicture}[scale=0.9]
\label{fg:exp_non_moral}
\begin{scope}[>={Stealth[black]},every edge/.style={draw=black}]
    \node (F) at (3,0) {$v_1$};
    \node (G) at (4,0) {$v_2$};
    \node (H) at (3,-1) {$v_3$};
    \node (I) at (4,-1) {$v_4$};
    \node (J) at (3.5,-2) {$v_5$};
    \path [-] (F) edge (G);
    \path [-] (F) edge (H);
    \path [-] (G) edge (I);
    \path [-] (H) edge (I);
    \path [-] (H) edge (J);
\end{scope}
\end{tikzpicture}
}
\caption{(a) a DAG, (b) its moral graph, (c) a non-weakly recursively simplicial graph.}
\label{fg:envelope}
\end{figure}
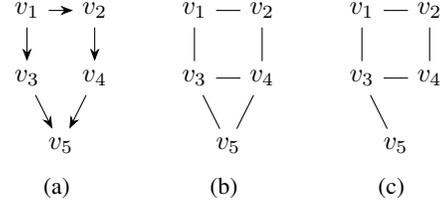

\begin{definition}
\label{def:wrs}
A graph $G=(V,E)$ is \textbf{weakly recursively simplicial} (WRS) if there exists a simplicial vertex $x \in V$ and $E'\subseteq E(G[N(x)])$ s.t. the subgraph $G'=G-x-E'$ is
weakly recursively simplicial.
\end{definition}

\begin{definition}
A set of \textbf{excesses} of a graph $G=(V,E)$ w.r.t. an ordering $\alpha$ is a bijection $\epsilon_{\alpha}: \{\alpha(1),\dots,\alpha(|V|)\} \leftrightarrow \{\epsilon_{\alpha}(\alpha(1)), \dots, \epsilon_{\alpha}(\alpha(|V|))\}$, where each excess $\epsilon_{\alpha}(\alpha(i)) \subseteq E(G[N(\alpha(i))])$ consists of some edges between the neighbours of $v_i$.
\end{definition}
The composition $\kappa=(\alpha,\epsilon_{\alpha})$ of an ordering of a graph $G$ and a set of excesses (w.r.t. $\alpha$) is called an \textit{elimination kit} of $G$. Let $\kappa(0)=\emptyset$ and $\kappa(i)=\{\alpha(i), \epsilon_{\alpha}(\alpha(i))\}$ be the $i^{th}$ elimination kit. The concept of elimination graph can be generalized to $G^i=G-\{\kappa(0),\dots,\kappa(i)\}$ for $i \in [0,|V|]$. 

\begin{definition}
Let $G=(V,E)$ be a graph and $\kappa=(\alpha, \epsilon_{\alpha})$ be an elimination kit of $G$. Then $\kappa$ is a \textbf{perfect elimination kit} (PEK) of $G$ if each node $x \in V$ satisfies $D_{G^{\alpha^{-1}(x)-1}}(x)=\emptyset$.
\end{definition}

\begin{theorem}
\label{thm:equivalent}
Let $G$ be a graph. The following are equivalent: 
\begin{enumerate}
    \item $G$ is moral.
    \item $G$ is weakly recursively simplicial.
    \item $G$ has a perfect elimination kit.
\end{enumerate}
\end{theorem}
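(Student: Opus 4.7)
The plan is to show the three implications $(1)\Rightarrow(2)\Rightarrow(3)\Rightarrow(1)$, or rather close them into a cycle, using the recursive structure shared by definitions \ref{def:wrs} and the PEK.

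First, I would dispose of $(2)\Leftrightarrow(3)$ as essentially a change of language. A PEK $\kappa=(\alpha,\epsilon_\alpha)$ on $G$ by definition says that $\alpha(1)$ is simplicial in $G^0=G$ and the restriction of $\kappa$ to $\{\alpha(2),\dots,\alpha(|V|)\}$ is a PEK of $G-\alpha(1)-\epsilon_\alpha(\alpha(1))$; this is exactly the inductive clause of Definition \ref{def:wrs}. Conversely, repeatedly picking a simplicial vertex $x$ and an excess $E'\subseteq E(G[N(x)])$ from the WRS definition and recording the pairs in order produces an elimination kit whose $i$-th vertex is simplicial in $G^{i-1}$, i.e.\ a PEK. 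So $(2)\Leftrightarrow(3)$.

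Next, $(2)\Rightarrow(1)$ is a direct construction by induction on $|V|$. If $G$ is WRS with simplicial vertex $x$ and excess $E'\subseteq E(G[N(x)])$, let $D'$ be a DAG with moral graph $G'=G-x-E'$ (by the inductive hypothesis). Form $D$ from $D'$ by adding $x$ and the directed edges $\overrightarrow{yx}$ for every $y\in N_G(x)$; this keeps $D$ acyclic since $x$ is a sink. The skeleton of $D$ is $\mathrm{skel}(D')\cup\{xy:y\in N_G(x)\}$, and the moralization adds precisely those pairs in $N_G(x)$ that are not already adjacent in $D'$; since $x$ is simplicial in $G$ the set $N_G(x)$ induces a complete subgraph of $G$, so the added pairs are exactly $E'$. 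Putting the pieces together, $E(\mathrm{moral}(D))=E(G')\cup E'\cup\{xy:y\in N_G(x)\}=E(G)$.

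The main obstacle, and the step I would write most carefully, is $(1)\Rightarrow(2)$. Let $G=\mathrm{moral}(D)$ and induct on $|V|$. Pick any sink $s$ of $D$; its neighbours in $G$ are exactly $P_D(s)$, all of which are pairwise linked by moralization, so $s$ is simplicial in $G$. The right excess is
\[
E' = \{\,uv\in E(G)\setminus E(\mathrm{skel}(D)) : \{u,v\}\subseteq P_D(s)\text{ and } \{u,v\}\not\subseteq P_D(w)\text{ for every } w\neq s\,\},
\]
i.e.\ the moral edges that exist solely because $s$ makes $u$ and $v$ co-parents. Clearly $E'\subseteq E(G[N(s)])$. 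I would then verify that $G-s-E'=\mathrm{moral}(D-s)$: no moral edge of $D$ is incident to $s$ (a sink has no co-parents), deleting $s$ from the skeleton yields $\mathrm{skel}(D-s)$, and since $s$ is a sink $P_{D-s}(w)=P_D(w)$ for every remaining $w$, so the moral edges of $D-s$ are exactly those of $D$ that are \emph{not} killed by removing $s$, which is $M_D\setminus E'$. By the induction hypothesis $\mathrm{moral}(D-s)$ is WRS, and combined with the simplicial vertex $s$ and excess $E'$ this shows $G$ is WRS. The bookkeeping about which moral edges of $D$ survive in $D-s$ is the delicate part; everything else is routine.
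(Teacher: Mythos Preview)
Your argument is correct and complete. Note, however, that the paper does \emph{not} actually prove Theorem~\ref{thm:equivalent}: its entire ``proof'' is the sentence ``The proof is contained in a paper that is currently under review.'' So there is nothing to compare against---you have supplied what the paper omits.

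A couple of remarks on your write-up. The equivalence $(2)\Leftrightarrow(3)$ is indeed purely a repackaging of the recursion, and your description is adequate. In the step $(2)\Rightarrow(1)$, the phrase ``the added pairs are exactly $E'$'' is slightly loose: moralising at the new sink $x$ adds all pairs in $N_G(x)$ not present in the \emph{skeleton} of $D'$, which includes both $E'$ and any moral edges of $D'$ that happen to lie inside $N_G(x)$. This does not affect the conclusion, since those extra edges are already in $E(G')$ and your final set equality $E(\mathrm{moral}(D))=E(G')\cup E'\cup\{xy:y\in N_G(x)\}=E(G)$ holds as written; just tighten the intermediate sentence when you write it out. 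In the step $(1)\Rightarrow(2)$, your identification of $E'$ as the moral edges witnessed \emph{only} at $s$ is exactly right, and the verification $G-s-E'=\mathrm{moral}(D-s)$ goes through because $s$ being a sink means it is never a parent, so $P_{D-s}(w)=P_D(w)$ for all remaining $w$ and no moral edge of $D$ is incident to $s$.
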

\begin{proof}
The proof is contained in a paper that is currently under review.
\end{proof}

\begin{example}
Figure \ref{fg:exp_dag} is a DAG $G$. Figure \ref{fg:exp_moral} is the moral graph of $G$. The edge $v_3v_4$ is a filled-edge by the moralization process. The moral graph has a PEK $\kappa=(\alpha,\epsilon_{\alpha})$, where $\alpha=\{v_5,v_3,v_4,v_1,v_2\}$ and $\epsilon_{\alpha}=\{\{v_3v_4\},\emptyset,\emptyset,\emptyset,\emptyset\}$. 

Figure \ref{fg:exp_non_moral} is a non-WRS graph. $v_5$ is the only simplicial node and $E(G[N(x)])=\emptyset$. The subgraph after removing $v_5$ and the empty set of edges is a $4$-cycle that has no simplicial node.    
\end{example}

\begin{corollary}
\label{cor:chordal_implies_moral}
If a graph is chordal, then it is moral. 
\end{corollary}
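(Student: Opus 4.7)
My plan is to prove the corollary by reducing it, via Theorem~\ref{thm:equivalent}, to showing that every chordal graph is weakly recursively simplicial (or equivalently, admits a perfect elimination kit). Since Theorem~\ref{thm:equivalent} asserts that morality, the WRS property, and the existence of a PEK are all equivalent, it suffices to verify any one of the latter two for chordal graphs.

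The proof will use the classical fact (Dirac) that every chordal graph on at least two vertices contains a simplicial vertex, together with the standard observation that deleting a vertex from a chordal graph leaves a chordal graph. I would proceed by induction on $|V|$. For the base case, a graph with a single vertex is vacuously WRS. For the inductive step, let $G$ be chordal with $|V|\ge 2$. Pick a simplicial vertex $x\in V$, and take $E'=\emptyset\subseteq E(G[N(x)])$. Then $G'=G-x-E'=G-x$ is again chordal, and hence WRS by the inductive hypothesis. By Definition~\ref{def:wrs}, $G$ is WRS, so by Theorem~\ref{thm:equivalent} it is moral.

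Equivalently, I can phrase the argument directly via perfect elimination kits: by iteratively extracting simplicial vertices from $G,G-x_1,G-x_1-x_2,\ldots$ one obtains a perfect elimination ordering $\alpha$ of $G$. Setting $\epsilon_{\alpha}(\alpha(i))=\emptyset$ for every $i$ yields an elimination kit $\kappa=(\alpha,\epsilon_{\alpha})$ in which the successive subgraphs $G^i=G-\{\kappa(0),\dots,\kappa(i)\}$ coincide with the usual elimination subgraphs $G-\{\alpha(1),\dots,\alpha(i)\}$; simpliciality of $\alpha(i)$ in $G^{i-1}$ is exactly the PEK condition $D_{G^{i-1}}(\alpha(i))=\emptyset$. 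Applying Theorem~\ref{thm:equivalent} again gives morality.

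I do not expect a genuine obstacle here: the only point requiring care is checking that with empty excesses the generalized elimination sequence in the PEK framework reduces to the classical one, so that the existing Dirac-style argument transfers verbatim. Everything else is immediate from the equivalence theorem and the hereditary nature of chordality under vertex deletion.
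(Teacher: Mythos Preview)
Your argument is correct and is exactly the intended one: the paper states the corollary immediately after Theorem~\ref{thm:equivalent} without a separate proof, relying implicitly on the classical fact that a chordal graph admits a perfect elimination ordering, which is a PEK with all excesses empty. Your write-up simply makes this implicit step explicit.
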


Denote a bipartite graph by $G=(P \sqcup Q, E)$, where $P\sqcup Q$ represents the disjoint union of two sets in vertices of $G$.
\begin{definition}
A bipartite graph $G=(P \sqcup Q,E)$ is \textbf{chain} if there is an ordering $\alpha: \{1, \dots, |P|\} \leftrightarrow P$ s.t. the neighbours of the vertices in $P$ form a chain $N_G(\alpha(|P|)) \subseteq \dots \subseteq N_G(\alpha(1))$.
\end{definition}
The definition is also well defined for the vertices in $Q$. 

\begin{definition}
Let $G=(P\sqcup Q,E)$ be a bipartite graph. The \textbf{partition completion} of $G$ is a function $C(\cdot)$ that makes each $P$ and $Q$ a clique. 
\end{definition}
In particular, the partition completion $C_P(\cdot)$ restricted to $P$ only makes $P$ a clique.

\begin{definition}
\label{def:bip_saturated_node}
A vertex in a bipartite graph is \textbf{saturated} if it is connected to every vertex in the other partition. 
\end{definition}

\begin{example}
The bipartite graph in Figure \ref{fg:chain} with only the solid edges is not a chain graph, because the neighbour sets of $P$'s nodes do not form a chain. But the bipartite graph with all the edges is a chain graph, because $N(1)\subseteq N(2)$ w.r.t. the ordering $\alpha=\{2,1\}$. In the chain graph, the node $2$ is saturated, because it is adjacent to all nodes in $Q$. 

\begin{figure}
    \centering
    \begin{tikzpicture}[
  every node/.style={draw,circle,minimum size=1mm,inner sep=1pt},
  fsnode/.style={fill=black},
  ssnode/.style={fill=black},
  every fit/.style={ellipse,draw,inner sep=-2pt,text width=1cm, minimum height=2.5cm}
]

\node[fill=black] (f1) [label=left: $1$] at (0, 0.5) {};
\node[fill=black] (f2) [label=left: $2$] at (0, -0.5) {};

\node[fill=black] (sa) [label=right: $a$] at (4,0.6) {};
\node[fill=black] (sb) [label=right: $b$] at (4,0) {};
\node[fill=black] (sc) [label=right: $c$] at (4,-0.6) {};

\node [black,fit=(f1) (f2),label=above:$P$] at (0,0) {};
\node [black,fit=(sa) (sc),label=above:$Q$] at (4,0) {};

\path[-] (f1) edge (sa);
\path[-] (f1) edge (sb);
\path[densely dashed] (f2) edge (sa);
\path[-] (f2) edge (sb);
\path[-] (f2) edge (sc);
\end{tikzpicture}    
    \caption{A bipartite non-chain graph with only the solid edges and a bipartite chain graph with all edges. The node $2$ is a saturated node in the chain graph.}
    \label{fg:chain}
\end{figure}
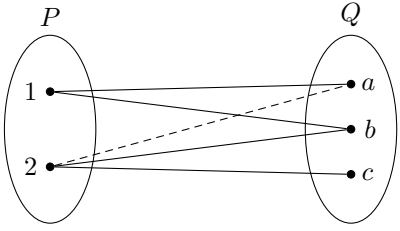

\end{example}

\section{Optimal moral graph triangulation}
\label{sec:triangulation}

\cite{yannakakis1981computing}, \cite{arnborg1987complexity} and \cite{wen1990optimal} proved a polynomial reduction from a NP-complete problem to the minimum fill-in, treewidth or total states problems for graphs, respectively. The NP-complete problems used in these proofs are the optimal linear arrangement (OLA), minimum cut linear arrangement (MCLA) and elimination degree sequence (EDS), respectively. These problems are listed below and can be found in \citep[page.~200-201]{garey2002computers}. 

\textbf{OPTIMAL LINEAR ARRANGEMENT} \\
INSTANCE: Graph $G=(V,E)$, positive integer $k \le |V|$. \\
QUESTION: Is there an ordering $\alpha:\{1, \dots, |V|\} \leftrightarrow V$ s.t. $c(\alpha)=\sum_{uv \in E}|\alpha^{-1}(u)-\alpha^{-1}(v)| \le k$? 

\textbf{MINIMUM CUT LINEAR ARRANGEMENT}\\
INSTANCE: Graph $G=(V,E)$, positive integer $k \le |V|$. \\
QUESTION: Is there an ordering $\alpha:\{1, \dots, |V|\} \leftrightarrow V$ s.t. $\forall i \in [1,|V|], |\{uv \in E \mid \alpha^{-1}(u) \le i < \alpha^{-1}(v)\}| \le k$?

\textbf{ELIMINATION DEGREE SEQUENCE}\\
INSTANCE: Graph $G=(V,E)$, sequence $<d_1, \dots, d_{|V|}>$ of non-negative integers not exceeding $|V|-1$. \\
QUESTION: Is there an ordering $\alpha:\{1, \dots, |V|\} \leftrightarrow V$ s.t. $\forall i \in [1, |V|]$, if $\alpha^{-1}(v)=i$ then there are exactly $d_i$ vertices $u$ s.t. $\alpha^{-1}(u) > i$ and $uv \in E$?

Each of the above problems asks if there exists an ordering $\alpha$ of a given graph s.t. a certain constraint is satisfied. By fixing a node $u \in V$ at an arbitrary place in an ordering $\alpha$, each of the OLA, MCLA and EDS problems seeks for a restricted ordering from the subset $A=\{\alpha \mid \alpha(i)=u\}$ of orderings of $G$ to satisfy its constraint. It is easy to verify that these restricted problems remain NP-complete. Because if there is an $O(|V|^k)$ time algorithm to answer the question within the restricted domain $A$, it takes $|V|\times O(|V|^k)$ time to answer the original question in the entire set of orderings. 

The motivation for creating a bipartite graph is the relation between chain graphs and chordal graphs stated next. 
\begin{lemma}
\label{lm:yannakakis_chain_chordal}
\citep{yannakakis1981computing}. $C(G')$ is chordal if and only if $G'$ is a bipartite chain graph. 
\end{lemma}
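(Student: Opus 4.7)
The plan is to prove the two directions separately, using the standard characterization that a graph is chordal if and only if it has a perfect elimination ordering, plus the fact that a chordal graph contains no induced cycle of length $\ge 4$.

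For the forward direction, I would assume $G'=(P\sqcup Q,E)$ is a bipartite chain graph with chain ordering $\alpha$ on $P$ satisfying $N_{G'}(\alpha(|P|)) \subseteq \cdots \subseteq N_{G'}(\alpha(1))$, and build a perfect elimination ordering of $C(G')$ by peeling $P$ from the ``short neighborhood'' end: first $\alpha(|P|)$, then $\alpha(|P|-1)$, and so on, followed by the vertices of $Q$ in any order. To verify that $\alpha(i)$ is simplicial when it is eliminated, note that its remaining neighbours in $C(G')$ are (i) $\alpha(1),\dots,\alpha(i-1)$, which are mutually adjacent because $P$ was completed to a clique, (ii) the vertices of $N_{G'}(\alpha(i))\subseteq Q$, which are mutually adjacent because $Q$ was completed to a clique, and (iii) all cross-edges $\alpha(j)q$ with $j<i$ and $q\in N_{G'}(\alpha(i))$, which exist because the chain condition gives $N_{G'}(\alpha(i)) \subseteq N_{G'}(\alpha(j))$. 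After $P$ is fully eliminated, what remains is the clique $Q$, whose vertices are trivially simplicial.

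For the reverse direction, I would argue by contrapositive: assume $G'$ is not a bipartite chain graph and exhibit an induced $4$-cycle in $C(G')$. Failure of the chain condition on $P$ yields two vertices $p,p'\in P$ whose neighbourhoods in $Q$ are incomparable, so there exist $q\in N_{G'}(p)\setminus N_{G'}(p')$ and $q'\in N_{G'}(p')\setminus N_{G'}(p)$. In $C(G')$ the four edges $pq$, $qq'$, $q'p'$, $p'p$ are all present (the first and third from $E$, the second from completing $Q$, the fourth from completing $P$), while the two possible chords $pq'$ and $p'q$ are absent in $G'$ by construction and are not added by $C(\cdot)$ since $C(\cdot)$ only inserts edges inside $P$ and inside $Q$. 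Thus $p,q,q',p'$ induce a chordless $4$-cycle, contradicting chordality.

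The only place that needs a moment's care is the reverse direction: one must be certain that the partition completion does \emph{not} accidentally fill in the cross-chords $pq'$ or $p'q$, which is what makes the restriction of $C(\cdot)$ to acting only within each side of the bipartition essential to the argument. Everything else reduces to routine verification against the definitions of chain graph, partition completion, and simpliciality, so I do not expect any serious obstacle.
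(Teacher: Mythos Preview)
Your argument is correct in both directions. The paper, however, does not give its own proof of this lemma: it simply cites \citet{yannakakis1981computing} and uses the result as a black box, so there is nothing to compare your approach against here. It is worth noting that your ``chain $\Rightarrow$ chordal'' half---eliminating $P$ from the short-neighbourhood end and then $Q$ arbitrarily---is precisely what the paper later states and proves separately as Lemma~\ref{lm:reverse_chain_ord}, because the explicit perfect elimination ordering is needed downstream in the treewidth reduction; your contrapositive induced-$C_4$ argument for the other direction does not appear anywhere in the paper.
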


It follows from the lemma that triangulation of the graph $C(G')$ is equivalent to making $G'$ a chain graph. The definition of WRS implies that having at least one simplicial node is necessary for graph morality. The next lemma proves a necessary condition for the partition completion of a bipartite graph to have at least one simplicial node. 
\begin{lemma}
\label{lm:nec_cond}
If a bipartite graph $G'=(P\sqcup Q,E')$ has no saturated node, then $C(G')$ has no simplcial node. 
\end{lemma}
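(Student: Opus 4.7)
The plan is to prove the contrapositive: whenever $C(G')$ has a simplicial vertex, $G'$ must have a saturated one. Concretely, I would assume some $x$ is simplicial in $C(G')$ and produce a vertex of $G'$ adjacent, in $G'$, to every vertex of the opposite part. By the symmetry of the partition completion, there is no loss of generality in taking $x \in P$.

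Next I pick the candidate for a saturated vertex. The standing convention in Section \ref{sec:pre} is that all graphs are connected, so as soon as $G'$ has more than one vertex the simplicial vertex $x$ has at least one neighbour in $G'$, and by bipartiteness that neighbour lies in $Q$. I choose any $z \in N_{G'}(x) \subseteq Q$ and claim that $z$ is saturated, i.e., $yz \in E'$ for every $y \in P$.

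The verification is the key step. For $y = x$ the edge $xz$ is already in $E'$ by the choice of $z$. For $y \in P \setminus \{x\}$, note that both $y$ and $z$ are neighbours of $x$ in $C(G')$: $y$ is adjacent to $x$ through the clique that partition completion imposes on $P$, while $z$ is adjacent to $x$ through the edge $xz \in E' \subseteq E(C(G'))$. Simpliciality of $x$ in $C(G')$ then forces $yz$ to be an edge of $C(G')$. But partition completion only adds edges inside $P$ or inside $Q$, and $y$ and $z$ sit on opposite sides, so the edge $yz$ can only be supplied by $E'$. Hence $yz \in E'$ for every $y \in P$, which makes $z$ saturated in $G'$, contradicting the hypothesis.

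The only place one might trip up is the corner case $N_{G'}(x) = \emptyset$, which would leave no candidate $z$ and would permit a simplicial $x$ with no matching saturated vertex in $G'$. This is exactly where the blanket connectedness assumption does the work, ruling out isolated vertices and guaranteeing that a neighbour $z$ always exists; once that is secured, the remainder is just a careful separation of edges of $C(G')$ into those inherited from $E'$ and those manufactured by partition completion.
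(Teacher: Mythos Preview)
Your argument is correct and mirrors the paper's proof exactly: both prove the contrapositive by taking a simplicial vertex $x$ (you place it in $P$, the paper in $Q$, which is the same up to the symmetry you invoke), using connectedness to pick a neighbour on the opposite side, and then observing that simpliciality together with the fact that partition completion adds no cross edges forces this neighbour to be saturated. Your write-up is slightly more explicit about why the forced edge $yz$ must lie in $E'$ rather than among the completion edges, but the underlying idea is identical.
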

\begin{proof}
Assume without loss of generality that $C(G')$ has a simplicial node $x \in Q$. The graph $G'$ being connected implies $x$ has non-empty neighbours $N_{G'}(x) \subseteq P$. In order for $x$ to be simplicial in $C(G')$, the two subsets of nodes $N_{G'}(x)$ and $Q$ must form a clique. It then follows that each node $y \in N_{G'}(x) \subseteq P$ must adjacent to all nodes in $Q$. Hence, $G'$ has at least one saturated node located in $P$.  
\end{proof}

\begin{lemma}
\label{lm:cp_bip_chordal}
If $G'=(P\sqcup Q,E')$ is a bipartite graph, then $C_p(G')$ is triangulated. 
\end{lemma}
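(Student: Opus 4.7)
The plan is to prove chordality of $C_p(G')$ directly, by verifying that every cycle of length at least $4$ has a chord. The first step is to record the structure of $C_p(G')$: by the definition of the partition completion restricted to $P$, the set $P$ induces a clique in $C_p(G')$; the set $Q$ remains an independent set, since $G'$ is bipartite and $C_p$ adds no edges within $Q$; and the edges between $P$ and $Q$ in $C_p(G')$ are exactly the original edges of $G'$.

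Next, I would take an arbitrary cycle $C = v_1 v_2 \cdots v_k v_1$ with $k \ge 4$ in $C_p(G')$ and split on whether $C$ meets $Q$. If $C$ lies entirely in $P$, then since $P$ is a clique, any two vertices $v_i, v_j$ that are not cycle-adjacent are joined by an edge of $C_p(G')$, and such a pair exists whenever $k \ge 4$, giving a chord. If instead $C$ contains some $q = v_i \in Q$, then because $Q$ is independent, both cycle-neighbours $v_{i-1}$ and $v_{i+1}$ of $q$ must lie in $P$; they are therefore adjacent in $C_p(G')$ since $P$ is a clique, and the edge $v_{i-1} v_{i+1}$ is a genuine chord of $C$ because the assumption $k \ge 4$ ensures $v_{i-1}$ and $v_{i+1}$ are distinct and not cycle-adjacent.

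I do not expect any real obstacle here. The only subtle point is confirming, in the case where the cycle contains a $Q$-vertex, that the purported chord $v_{i-1} v_{i+1}$ is not already an edge of $C$ itself; this is precisely what the bound $k \ge 4$ guarantees. Thus the whole proof will reduce to a one-paragraph case analysis on the location of cycle vertices in $P$ versus $Q$, using nothing beyond the structural description of $C_p(G')$ in the opening step.
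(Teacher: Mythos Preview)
Your argument is correct: $C_p(G')$ is a split graph (a clique $P$ together with an independent set $Q$), and your two-case analysis cleanly shows every cycle of length at least four has a chord. The paper's own proof of this lemma is literally the single word ``Trivial'', so there is no approach to compare against; your write-up is a sound and standard way to fill in that omitted detail.
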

\begin{proof}
Trivial. 
\end{proof}

\subsection{Minimum fill-in}
\label{subsec:mini_fillin}

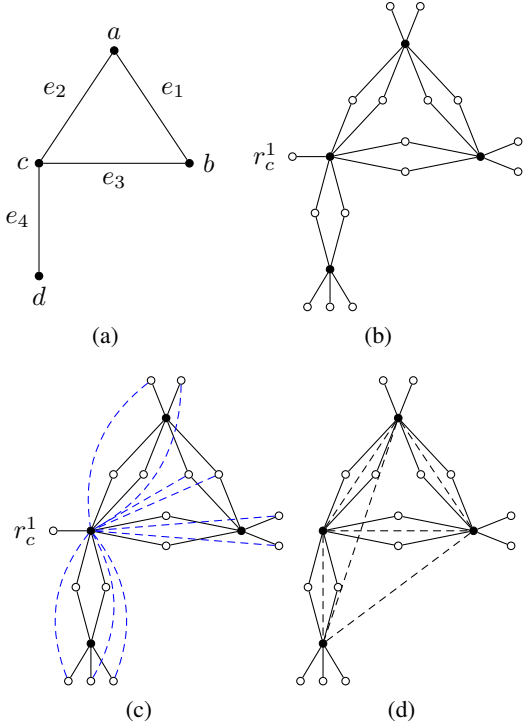
\begin{figure}
\centering
\subfigure[]{
\label{fg:g}
\begin{tikzpicture}[scale=1]
\begin{scope}[every node/.style={circle,draw,fill=black,minimum size=1mm,inner sep=1pt},>={Stealth[black]}]
    \node[label=above:$a$] (A) at (0,0.5) {};
    \node[label=left:$c$] (C) at (-1,-1) {};
    \node[label=right:$b$] (B) at (1,-1) {};
    \node[label=below:$d$] (D) at (-1,-2.5) {};
\end{scope}

\begin{scope}[every edge/.style={draw=black}]
    \path [-] (A) edge node[above right]{$e_1$} (B);
    \path [-] (A) edge node[above left]{$e_2$} (C);
    \path [-] (B) edge node[below]{$e_3$} (C);
    \path [-] (C) edge node[left]{$e_4$} (D);
\end{scope}
\end{tikzpicture}
}
\subfigure[]{
\label{fg:g'_yannakakis}
\begin{tikzpicture}[scale=1]
\begin{scope}[every node/.style={circle,draw,fill=black,minimum size=1mm,inner sep=1pt},>={Stealth[black]}]
    \node (A) at (0,0.5) {};
    \node[fill=white] (a1) at (-0.2,1) {};
    \node[fill=white] (a2) at (0.2,1) {};    
    \node (C) at (-1,-1) {};
    \node[fill=white,label=left:$r_c^1$] (c1) at (-1.5,-1) {};    
    \node (B) at (1,-1) {};
    \node[fill=white] (b1) at (1.5,-1.2) {};
    \node[fill=white] (b2) at (1.5,-0.8) {};
    \node (D) at (-1,-2.5) {};
    \node[fill=white] (d1) at (-1.3,-3) {};    
    \node[fill=white] (d2) at (-1,-3) {};    
    \node[fill=white] (d3) at (-0.7,-3) {};            
    \node[fill=white] (E11) at (0.7,-0.25) {};
    \node[fill=white] (E12) at (0.3,-0.25) {};
    \node[fill=white] (E21) at (-0.7,-0.25) {};
    \node[fill=white] (E22) at (-0.3,-0.25) {};
    \node[fill=white] (E31) at (0,-1.2) {};
    \node[fill=white] (E32) at (0,-0.8) {};
    \node[fill=white] (E41) at (-1.2,-1.75) {};
    \node[fill=white] (E42) at (-0.8,-1.75) {};
    \path [-] (A) edge (a1);
    \path [-] (A) edge (a2);  
    \path [-] (B) edge (b1);
    \path [-] (B) edge (b2);  
    \path [-] (C) edge (c1);  
    \path [-] (D) edge (d1); 
    \path [-] (D) edge (d2); 
    \path [-] (D) edge (d3);   
    \path [-] (A) edge (E11);
    \path [-] (A) edge (E12);  
    \path [-] (B) edge (E11);
    \path [-] (B) edge (E12);   
    \path [-] (A) edge (E21);
    \path [-] (A) edge (E22);  
    \path [-] (C) edge (E21);
    \path [-] (C) edge (E22); 
    \path [-] (B) edge (E31);
    \path [-] (B) edge (E32);  
    \path [-] (C) edge (E31);
    \path [-] (C) edge (E32); 
    \path [-] (D) edge (E41);
    \path [-] (D) edge (E42);  
    \path [-] (C) edge (E41);
    \path [-] (C) edge (E42);
\end{scope}
\end{tikzpicture}
}

\subfigure[]{
\label{fg:g_hat_yannakakis}
\begin{tikzpicture}[scale=1]
\begin{scope}[every node/.style={circle,draw,fill=black,minimum size=1mm,inner sep=1pt},>={Stealth[black]}]
    \node (A) at (0,0.5) {};
    \node[fill=white] (a1) at (-0.2,1) {};
    \node[fill=white] (a2) at (0.2,1) {};    
    \node (C) at (-1,-1) {};
    \node[fill=white,label=left:$r_c^1$] (c1) at (-1.5,-1) {};    
    \node (B) at (1,-1) {};
    \node[fill=white] (b1) at (1.5,-1.2) {};
    \node[fill=white] (b2) at (1.5,-0.8) {};
    \node (D) at (-1,-2.5) {};
    \node[fill=white] (d1) at (-1.3,-3) {};    
    \node[fill=white] (d2) at (-1,-3) {};    
    \node[fill=white] (d3) at (-0.7,-3) {};            
    \node[fill=white] (E11) at (0.7,-0.25) {};
    \node[fill=white] (E12) at (0.3,-0.25) {};
    \node[fill=white] (E21) at (-0.7,-0.25) {};
    \node[fill=white] (E22) at (-0.3,-0.25) {};
    \node[fill=white] (E31) at (0,-1.2) {};
    \node[fill=white] (E32) at (0,-0.8) {};
    \node[fill=white] (E41) at (-1.2,-1.75) {};
    \node[fill=white] (E42) at (-0.8,-1.75) {};
    \path [-] (A) edge (a1);
    \path [-] (A) edge (a2);  
    \path [-] (B) edge (b1);
    \path [-] (B) edge (b2);  
    \path [-] (C) edge (c1);  
    \path [-] (D) edge (d1); 
    \path [-] (D) edge (d2); 
    \path [-] (D) edge (d3);   
    \path [-] (A) edge (E11);
    \path [-] (A) edge (E12);  
    \path [-] (B) edge (E11);
    \path [-] (B) edge (E12);   
    \path [-] (A) edge (E21);
    \path [-] (A) edge (E22);  
    \path [-] (C) edge (E21);
    \path [-] (C) edge (E22); 
    \path [-] (B) edge (E31);
    \path [-] (B) edge (E32);  
    \path [-] (C) edge (E31);
    \path [-] (C) edge (E32); 
    \path [-] (D) edge (E41);
    \path [-] (D) edge (E42);  
    \path [-] (C) edge (E41);
    \path [-] (C) edge (E42);
    
    \path [densely dashed,bend right, blue] (a1) edge (C);
    \path [densely dashed,bend left, blue] (a2) edge (C);   
    \path [densely dashed, blue] (b1) edge (C);
    \path [densely dashed, blue] (b2) edge (C); 
    \path [densely dashed,bend left, blue] (d1) edge (C);
    \path [densely dashed,bend right, blue] (d2) edge (C);
    \path [densely dashed,bend right, blue] (d3) edge (C);
    \path [densely dashed, blue] (E11) edge (C);
    \path [densely dashed, blue] (E12) edge (C);
\end{scope}
\end{tikzpicture}
}
\subfigure[]{
\label{fg:g_hat_wrs_sub_yannakakis}
\begin{tikzpicture}[scale=1]
\begin{scope}[every node/.style={circle,draw,fill=black,minimum size=1mm,inner sep=1pt},>={Stealth[black]}]
    \node (A) at (0,0.5) {};
    \node[fill=white] (a1) at (-0.2,1) {};
    \node[fill=white] (a2) at (0.2,1) {};    
    \node (C) at (-1,-1) {};    
    \node (B) at (1,-1) {};
    \node[fill=white] (b1) at (1.5,-1.2) {};
    \node[fill=white] (b2) at (1.5,-0.8) {};
    \node (D) at (-1,-2.5) {};
    \node[fill=white] (d1) at (-1.3,-3) {};    
    \node[fill=white] (d2) at (-1,-3) {};    
    \node[fill=white] (d3) at (-0.7,-3) {};            
    \node[fill=white] (E11) at (0.7,-0.25) {};
    \node[fill=white] (E12) at (0.3,-0.25) {};
    \node[fill=white] (E21) at (-0.7,-0.25) {};
    \node[fill=white] (E22) at (-0.3,-0.25) {};
    \node[fill=white] (E31) at (0,-1.2) {};
    \node[fill=white] (E32) at (0,-0.8) {};
    \node[fill=white] (E41) at (-1.2,-1.75) {};
    \node[fill=white] (E42) at (-0.8,-1.75) {};
    \path [densely dashed] (A) edge (B);
    \path [densely dashed] (A) edge (C);
    \path [densely dashed] (A) edge (D);    
    \path [densely dashed] (C) edge (B);
    \path [densely dashed] (D) edge (B);    
    \path [densely dashed] (C) edge (D);
    \path [-] (A) edge (a1);
    \path [-] (A) edge (a2);  
    \path [-] (B) edge (b1);
    \path [-] (B) edge (b2);    
    \path [-] (D) edge (d1); 
    \path [-] (D) edge (d2); 
    \path [-] (D) edge (d3);   
    \path [-] (A) edge (E11);
    \path [-] (A) edge (E12);  
    \path [-] (B) edge (E11);
    \path [-] (B) edge (E12);   
    \path [-] (A) edge (E21);
    \path [-] (A) edge (E22);  
    \path [-] (C) edge (E21);
    \path [-] (C) edge (E22); 
    \path [-] (B) edge (E31);
    \path [-] (B) edge (E32);  
    \path [-] (C) edge (E31);
    \path [-] (C) edge (E32); 
    \path [-] (D) edge (E41);
    \path [-] (D) edge (E42);  
    \path [-] (C) edge (E41);
    \path [-] (C) edge (E42);
\end{scope}
\end{tikzpicture}
}
\caption{\subref{fg:g} a graph $G$; \subref{fg:g'_yannakakis} the bipartite graph $G'$ transformed from $G$ by steps Y1-Y3; \subref{fg:g_hat_yannakakis} the bipartite graph $\hat{G}$ transformed from $G'$ by saturating the node $c$ by the additional step L4; \subref{fg:g_hat_wrs_sub_yannakakis} the subgraph obtained from $C(\hat{G})$ by removing a simplicial node $r_c^1$ and its excess (specified in Lemma \ref{lm:make_cg'_moral}).}
\label{fg:yannakakis_reduction}
\end{figure}

\cite{yannakakis1981computing} presented a polynomial transformation from an instance of the OLA problem into an instance of the minimum fill-in problem for graphs. The process first takes a graph $G=(V,E)$ (Figure \ref{fg:g}) and transforms it into a bipartite graph $G'=(P\sqcup Q,E')$ (Figure \ref{fg:g'_yannakakis}) by the following steps:
\begin{enumerate}[label=Y\arabic*.]
    \item $P = \{u \mid u \in V\}$,
    \item $Q=\{e_i^j \mid j \in \{1,2\}, e_i \in E\} \cup \{R(u) \mid u \in V\}$, where $R(u) = \{r_u^j \mid j \in \{1, \dots, |V|-d_G(u)\}\}$, 
    \item $E'=\{ue_i^j \mid j \in \{1,2\}, e_i \in E \text{ s.t. } u \in V(e_i)\}\cup \{uv \mid v \in R(u), u \in P\}$.
\end{enumerate}
It then applies the partition completion on the bipartite graph $G'$ to obtain the graph $C(G')$, on which the minimum fill-in problem is solved. As can be seen, each edge node $e_i^j \in Q$ is incident to exactly two nodes in $P$, so $G'$ has no saturated node, unless $G=uv$. This implies by Lemma \ref{lm:nec_cond} that $C(G')$ is not moral. The key to make $C(G')$ moral is the additional step
\begin{enumerate}[label=L\arabic*.]
     \setcounter{enumi}{3}
    \item for a given node $u \in V$, let $\hat{E} = E' \cup S(u)$, where $S(u)=\{uv \notin E' \mid v \in Q\}$
\end{enumerate}
that makes a given node $u$ saturated in $\hat{G}$ (Figure \ref{fg:g_hat_yannakakis}). It is easy to see that the modified transformation can still be done in polynomial time, because the number of edges added by L4 is linear to the number of nodes in $Q$. 

Having a simplicial node is a necessary but not sufficient condition for being moral. The next lemma proves why adding L4 to Y1-Y3 guarantees the morality of $C(\hat{G})$, so that it becomes an instance of the minimum fill-in problem for moral graphs. 

\begin{lemma}
\label{lm:make_cg'_moral}
Let $\hat{G}=(P\sqcup Q,\hat{E})$ be the bipartite graph constructed from a graph $G=(V,E)$ by steps Y1-Y3 \& L4 for a given node $w \in V$. Then $C(\hat{G})$ is moral. 
\end{lemma}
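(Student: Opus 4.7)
The plan is to apply Theorem~\ref{thm:equivalent} and prove instead that $C(\hat{G})$ is weakly recursively simplicial, by exhibiting a single simplicial vertex together with an excess whose removal turns $C(\hat{G})$ into a chordal graph; chordality will then give the morality of the residual via Corollary~\ref{cor:chordal_implies_moral}, and one unfolding of the WRS definition closes the argument. The figure \ref{fg:g_hat_wrs_sub_yannakakis} already hints at the right choice: peel $r_w^1$ and strip both the saturation edges and the $Q$-clique edges from its neighbourhood.

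First I would take $x = r_w^1 \in R(w) \subseteq Q$, which exists because $|V| - d_G(w) \ge 1$. Its neighbourhood in $C(\hat{G})$ is $\{w\} \cup (Q \setminus \{r_w^1\})$. This set is a clique: $Q \setminus \{r_w^1\}$ is a clique from the partition completion, and $w$ is adjacent to every vertex of $Q$ because step L4 saturates $w$. Hence $r_w^1$ is simplicial in $C(\hat{G})$.

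Next I would choose the excess $E^*$ to be the union of $S(w)$ (the edges added by L4) and all edges of the $Q$-clique with both endpoints in $Q \setminus \{r_w^1\}$. Every edge in $E^*$ has both endpoints in $\{w\} \cup (Q \setminus \{r_w^1\}) = N(r_w^1)$, so $E^* \subseteq E(C(\hat{G})[N(r_w^1)])$, as required. The core calculation is to verify that $C(\hat{G}) - r_w^1 - E^* = C_P(G') - r_w^1$: stripping $S(w)$ undoes L4 and leaves only the bipartite edges of $G'$; stripping the $Q$-clique edges kills all of $Q$'s internal completion; the $P$-clique is untouched; and deleting $r_w^1$ removes only the single bipartite edge $w r_w^1$, since $r_w^1$'s sole $\hat{G}$-neighbour was $w$.

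Since $G' - r_w^1$ is still bipartite, Lemma~\ref{lm:cp_bip_chordal} gives that $C_P(G' - r_w^1) = C_P(G') - r_w^1$ is chordal, hence moral by Corollary~\ref{cor:chordal_implies_moral}, and therefore WRS by Theorem~\ref{thm:equivalent}. Applying the WRS definition with the witness pair $(r_w^1, E^*)$ shows $C(\hat{G})$ is WRS, i.e.\ moral. The only delicate step is the edge bookkeeping: confirming that the chosen excess really lies in $E(C(\hat{G})[N(r_w^1)])$ and that subtracting it together with $r_w^1$ recovers $C_P(G')$ on the remaining vertex set rather than $C_P(\hat{G})$. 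Once that identification is pinned down, the conclusion is a direct appeal to the already-established lemmas.
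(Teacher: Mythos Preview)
Your proposal is correct and follows essentially the same route as the paper: pick $r_w^1$ as the simplicial vertex, take as excess exactly $S(w)$ together with the $Q$-clique edges not touching $r_w^1$, identify the residual with $C_P(G'-r_w^1)$, and then invoke Lemma~\ref{lm:cp_bip_chordal}, Corollary~\ref{cor:chordal_implies_moral} and Theorem~\ref{thm:equivalent}. Your version is simply more explicit about the edge bookkeeping (and about why $r_w^1$ exists), but the argument is the same.
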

\begin{proof}
Since $w$ is a saturated node and the partition completion makes $P$ and $Q$ cliques, the neighbour set $N_{C(\hat{G})}(r_w^1)=\{w,Q-r_w^1\}$ forms a clique. Hence $r_w^1$ is a simplicial node in $C(\hat{G})$. Removing $r_w^1$ and its excess $\epsilon(r_w^1)=S(w) \cup \{uv \mid u,v \in Q \text{ s.t. } u,v \neq r_w^1\}$, the resulting subgraph (Figure \ref{fg:g_hat_wrs_sub_yannakakis}) is the same as $C_p(G'-r_w^1)$. By Lemma \ref{lm:cp_bip_chordal}, Corollary \ref{cor:chordal_implies_moral} and Theorem \ref{thm:equivalent}, $C_p(G'-r_w^1)$ is triangulated and so WRS. It follows that $C(\hat{G})$ is also WRS and thus moral by Theorem \ref{thm:equivalent}. 
\end{proof}

It remains to show that a \textit{Yes} instance of the restricted OLA problem is also a \textit{Yes} instance of the minimum fill-in problem for moral graphs and vice versa. To do so, the next lemma calculates the difference between the cost of a graph $G$ w.r.t. an ordering $\alpha$ and the number of fill-in edges that triangulates the corresponding moral graph $C(\hat{G})$ w.r.t. $\alpha$. And it proves that the difference is a constant for any restricted ordering $\alpha$. Define the cost of an edge $e=uv \in E$ w.r.t. an ordering $\alpha$ to be $\delta(e,\alpha)=|\alpha^{-1}(u)-\alpha^{-1}(v)|$. 

\begin{lemma}
\label{lm:mini_chain_fillin_npc}
Given a graph $G=(V,E)$ and a positive integer $k \le |V|$, for any node $w \in V$ the minimum cost w.r.t. an ordering $\alpha$ of $G$ is $k$ with $\alpha^{-1}(w)=|V|$ if and only if the corresponding moral graph $C(\hat{G})$ can be triangulated by  $\lambda=k+\frac{|V|(|V|-1)(|V|-2)}{2} - 2|E| + d_G(w)$ fill in edges w.r.t. $\alpha$.  
\end{lemma}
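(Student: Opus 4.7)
The strategy is to reduce the fill-in count of $C(\hat{G})$ to the number of edges needed to complete $\hat{G}$ to a bipartite chain graph, then evaluate that count explicitly as a function of $c(\alpha)$. The key observation is that the partition completion has already made $P$ and $Q$ cliques in $C(\hat{G})$, so every non-edge of $C(\hat{G})$ lies between $P$ and $Q$ and coincides with a missing $P$-$Q$ edge of $\hat{G}$. By Lemma~\ref{lm:yannakakis_chain_chordal}, a set of fill-in edges triangulates $C(\hat{G})$ if and only if those same edges complete $\hat{G}$ to a chain graph. Thus the fill-in count of $C(\hat{G})$ with respect to an ordering extending $\alpha$ on $P$ equals the number of edges needed to make $\hat{G}$ a chain graph whose $P$-chain ordering is $\bar{\alpha}$, the reverse of $\alpha$.

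Next I would enumerate those edges. Since $w$ is saturated in $\hat{G}$, it must occupy position~$1$ in the chain ordering, which matches $\bar{\alpha}^{-1}(w) = 1$ (the reverse of $\alpha^{-1}(w) = |V|$). For each $q \in Q$, minimally extending $N_{\hat{G}}(q)$ to a $\bar{\alpha}$-prefix $\{\bar{\alpha}(1),\ldots,\bar{\alpha}(k_q)\}$ adds $k_q - |N_{\hat{G}}(q)|$ edges, where $k_q = \max_{p \in N_{\hat{G}}(q)}\bar{\alpha}^{-1}(p)$. Splitting $Q$ into four types yields: a rank node $r_u^j$ with $u \neq w$ contributes $\bar{\alpha}^{-1}(u) - 2$; a rank node $r_w^j$ contributes $0$; an edge node $e_i^j$ for $e_i = uv$ with $u,v \neq w$ contributes $\max(\bar{\alpha}^{-1}(u),\bar{\alpha}^{-1}(v)) - 3$; and an edge node $e_i^j$ for $e_i = wv$ contributes $\bar{\alpha}^{-1}(v) - 2$. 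Multiplying each contribution by the number of such $q$ (namely $|V| - d_G(u)$, $|V| - d_G(w)$, $2$, and $2$, respectively) and summing gives the total.

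The final step is algebraic consolidation. I would substitute $\bar{\alpha}^{-1}(v) = |V| + 1 - \alpha^{-1}(v)$, apply the identity $\max(a,b) = (a+b+|a-b|)/2$ to extract $c(\alpha) = \sum_{uv \in E}|\alpha^{-1}(u) - \alpha^{-1}(v)|$ from the edge-node $\max$ sum, and use $\sum_v \alpha^{-1}(v) = |V|(|V|+1)/2$ together with $\sum_u d_G(u) = 2|E|$. The degree-weighted sums $\sum_u d_G(u)\alpha^{-1}(u)$ that arise in the rank-node and edge-node totals carry opposite signs and cancel, and the remaining constants collapse to $\tfrac{|V|(|V|-1)(|V|-2)}{2} - 2|E| + d_G(w)$. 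This yields the claimed $\lambda = k + \tfrac{|V|(|V|-1)(|V|-2)}{2} - 2|E| + d_G(w)$ with $k = c(\alpha)$. The converse direction of the iff follows because any triangulation of $C(\hat{G})$ corresponds via Lemma~\ref{lm:yannakakis_chain_chordal} to a chain completion of $\hat{G}$, whose $P$-chain ordering (with $w$ first, forced by $w$'s saturation) when reversed recovers an ordering $\alpha$ of $G$ satisfying $\alpha^{-1}(w) = |V|$ and the cost equation above.

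The main obstacle is the four-way bookkeeping across $Q$-node types, together with verifying that the degree-weighted terms cancel cleanly in the sum; once the correspondence between elimination orderings of $C(\hat{G})$ that eliminate $Q$ first and chain completions of $\hat{G}$ is pinned down, the remainder of the proof is a systematic algebraic simplification driven by the formulas above.
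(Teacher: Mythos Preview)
Your proposal is correct and shares the core mechanism with the paper: both reduce the triangulation of $C(\hat{G})$ to chain-completion of $\hat{G}$ via Lemma~\ref{lm:yannakakis_chain_chordal}, then count the completion edges as a function of the ordering cost.

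The difference lies in how the counting is organised. The paper does not recompute from scratch for $\hat{G}$; instead it quotes Yannakakis's formula $f_{G'}(\pi)=c(\pi)+\tfrac{|V|^2(|V|-1)}{2}-2|E|$ for the unsaturated bipartite graph $G'$, observes that when $w$ is placed first in the chain order the saturating edges $S(w)$ are already contained in $F_{G'}(\alpha)$, and simply subtracts $|S(w)|=|V|(|V|-1)-d_G(w)$ to obtain $f_{\hat{G}}(\alpha)$. Your approach instead performs the direct count on $\hat{G}$, which forces a four-way case split over $Q$ (rank nodes with $u\neq w$, rank nodes of $w$, edge nodes avoiding $w$, edge nodes incident to $w$) and a heavier algebraic simplification. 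Both arrive at the same constant; the paper's route is shorter because the cancellation of the degree-weighted sums $\sum_u d_G(u)\alpha^{-1}(u)$ is inherited from Yannakakis rather than re-derived, while your route is more self-contained and makes explicit where the extra term $d_G(w)$ enters.
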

\begin{proof}
Let $\hat{G}$ be the polynomial transformed graph from $G=(V,E)$ using steps Y1-Y3 $\&$ L4. For a given node $w \in V$, an ordering $\alpha \in A=\{\alpha \mid \alpha(1)=w\}$ uniquely specifies a set $F_{\hat{G}}(\alpha)$ of fill-in edges to make $\hat{G}$ a chain by the following two steps:
\begin{enumerate}[label=\alph*.]
    \item for each node $u \in Q$ calculate $\sigma(u)=\max\{i\mid u\alpha(i) \in E\}$,
    \item for any ordering $\alpha$, define $F_{\hat{G}}(\alpha)=\{u\alpha(j) \notin \hat{E} \mid c\neq u \in Q, j < \sigma(u)\}$.
\end{enumerate}
It is easy to see that $F_{\hat{G}}(\alpha)$ is minimal because any edge deletion from it stops the neighbours of $P$'s nodes in $\hat{G}+F_{\hat{G}}(\alpha)$ from forming a chain. Lemma \ref{lm:yannakakis_chain_chordal} implies $C(\hat{G})+F_{\hat{G}}(\alpha)$ is triangulated, so $F_{\hat{G}}(\alpha)$ is a minimal triangulation of $C(\hat{G})$. It remains to show that for every ordering $\alpha \in A$, the cardinality of $F_{\hat{G}}(\alpha)$ satisfies 
\begin{equation}
\label{eq:h_alpha}
    f_{\hat{G}}(\alpha)=c(\alpha)+\frac{|V|(|V|-1)(|V|-2)}{2} - 2|E| + d_G(w),
\end{equation}
where $c(\alpha)$ is the total cost of $G$ w.r.t. $\alpha$. \cite{yannakakis1981computing} proved that for every ordering $\pi$ (not necessarily in $A$), the number of fill-in edges
\begin{equation}
\label{eq:yannakakis'_h_alpha}
f_{G'}(\pi)=c(\pi) + \frac{|V|^2(|V|-1)}{2}-2|E|.
\end{equation}
The following is a brief explanation of \cite{yannakakis1981computing}'s proof of equation (\ref{eq:yannakakis'_h_alpha}). For every $v \in V$, each $x \in R(v)$ connects to $\pi^{-1}(v)-1$ nodes in $P$, whose orderings are smaller than $\sigma(x)$. For any edge $e=uv \in E$, assume without loss of generality that  $\pi^{-1}(u) < \pi^{-1}(v)$. Since each $e^j$ in $\hat{G}$ is adjacent to the two end nodes of the edge $e$, step (b) connects $e^j$ to $\pi^{-1}(v)-2=\pi^{-1}(u)+[\pi^{-1}(v)-\pi^{-1}(u)]-2=\pi^{-1}(u)+\delta(e,\pi)-2$ nodes in $P$. Hence, $F_{G'}(\pi)$ contains $\pi^{-1}(v)+\pi^{-1}(u)+\delta(e,\pi)-4$ edges incident to both $e^1$ and $e^2$. Therefore, the number of fill in edges w.r.t to $\pi$ is calculated by 
\begin{align*}
f_{G'}(\pi)= &\sum_{v \in V} \sum_{x \in R(v)} [\pi^{-1}(v)-1] +\\
& \sum_{e=uv\in E} [\pi^{-1}(v)+\pi^{-1}(u)+\delta(e,\pi)-4] \\
= &\sum_{v \in V} [|V|-d_G(v)][\pi^{-1}(v)-1] +\\
& \sum_{v\in V} d_G(v)\pi^{-1}(v) + \sum_{e\in E} \delta(e, \pi) - 4|E|\\
=& \sum_{v \in V} |V|[\pi^{-1}(v)-1] +\\
& \sum_{v\in V} d_G(v) + c(\pi) - 4|E|\\
=& c(\pi) + \frac{|V|^2(|V|-1)}{2}-2|E|,
\end{align*}
where by definition $\sum_{e \in E} \delta(e,\pi) = c(\pi)$ and the last equality obtains because $\sum_{v\in V} d_G(v)=2|E|$ and $\sum_{v \in V} (\pi^{-1}(v)-1)=|V|(|V|-1)/2$. Note that the reason to make two edge nodes and the residual nodes is to cancel the term $d_G(v)$ during the derivation of $f_{G'}(\pi)$, so that the difference between $f_{G'}(\pi)$ and $c(\pi)$ is constant, regardless of the corresponding ordering. 

The size difference between the two sets of edges $\hat{E}$ and $E'$ is 
\begin{align}
\label{eq:sc}
|S(w)|=& 2(|E|-d_G(w))+\sum_{u \in V}^{u \neq w} |R(u)| \nonumber \\
=& 2(|E|-d_G(w))+\sum_{u \in V}^{u \neq w}(|V|-d_G(u)) \nonumber \\
= &2(|E|-d_G(w))+|V|(|V|-1) - \nonumber \\
&\sum_{u \in V}d_G(u) + d_G(w) \nonumber \\
= & |V|(|V|-1) - d_G(w).
\end{align}
Since equation (\ref{eq:yannakakis'_h_alpha}) is true for every ordering, it certainly holds for orderings in $A$. It follows from $\alpha^{-1}(w)=1$ that $S(w) \subseteq F_{G'}(\alpha)$. Hence, equation (\ref{eq:h_alpha}) is obtained by subtracting equation (\ref{eq:sc}) from equation (\ref{eq:yannakakis'_h_alpha}). If there exists an ordering $\alpha$ of $G$, w.r.t. which the minimum cost of $G$ is $k$, then $\alpha$ produces a set of fill-in edges that triangulates the moral graph $C(\hat{G})$ with $\lambda$ edges. Conversely, if the moral graph $C(\hat{G})$ can be triangulated w.r.t. an ordering $\alpha$ with $\lambda$ fill-in edges, $\alpha$ indicates the minimum cost of the graph $G$ is $k$. 
\end{proof}

\begin{theorem}
The minimum fill-in problem for moral graphs is NP-complete.
\end{theorem}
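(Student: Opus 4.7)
The plan is a standard two-part NP-completeness argument: first establish that the minimum fill-in problem for moral graphs lies in NP, and then exhibit a polynomial-time many-one reduction from the restricted OLA problem, which the paper has already observed is NP-complete. The heavy lifting is already done in Lemmas \ref{lm:make_cg'_moral} and \ref{lm:mini_chain_fillin_npc}: the former certifies that the constructed instance is indeed a moral graph, and the latter translates ``ordering of cost at most $k$'' on the OLA side into ``triangulation by at most $\lambda$ fill-in edges'' on the minimum fill-in side.

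For NP membership, a certificate is a set $F$ of candidate fill-in edges with $|F|\le\lambda$; the verifier checks $|F|$ and then tests whether $C(\hat{G})+F$ is chordal in polynomial time, for instance via maximum-cardinality search or lexicographic BFS.

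For the reduction, given an instance $(G=(V,E),k)$ of the restricted OLA problem in which a distinguished vertex $w$ is pinned at the position prescribed in Lemma \ref{lm:mini_chain_fillin_npc}, I would build $\hat{G}$ from $G$ via steps Y1--Y3 followed by step L4 applied to $w$, and output the moral graph $C(\hat{G})$ together with the integer $\lambda=k+\frac{|V|(|V|-1)(|V|-2)}{2}-2|E|+d_G(w)$. The construction is clearly polynomial: $|P\sqcup Q|=|V|+2|E|+\sum_{u}(|V|-d_G(u))=O(|V|^2)$, and the total number of edges, including those introduced by the partition completion and by the saturating step L4, is $O(|V|^4)$. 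Lemma \ref{lm:make_cg'_moral} guarantees that the output is a legitimate instance of the minimum fill-in problem for moral graphs, and the ``if and only if'' of Lemma \ref{lm:mini_chain_fillin_npc} supplies the required equivalence of Yes-instances in both directions, completing the reduction.

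The main obstacle is never in the reduction itself but in adapting Yannakakis's original construction so that the resulting instance is moral; that is precisely what step L4 and Lemma \ref{lm:make_cg'_moral} accomplish, while Lemma \ref{lm:mini_chain_fillin_npc} shows that saturating $w$ merely shifts the count of fill-in edges by the fixed constant $|V|(|V|-1)-d_G(w)$, which is absorbed into the constant term of $\lambda$. With these two ingredients in place and NP-membership established, the NP-completeness of the minimum fill-in problem for moral graphs follows immediately from the NP-completeness of the restricted OLA problem.
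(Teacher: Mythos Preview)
Your proposal is correct and follows essentially the same approach as the paper's own proof: establish NP membership by observing that a candidate fill-in set can be size-checked and its chordality verified in polynomial time, and obtain NP-hardness by citing the polynomial transformation $G\mapsto C(\hat G)$ (whose morality is guaranteed by Lemma~\ref{lm:make_cg'_moral}) together with the equivalence of Yes-instances furnished by Lemma~\ref{lm:mini_chain_fillin_npc}. Your write-up merely spells out a few details (explicit size bounds on $\hat G$, the chordality test via MCS/LexBFS) that the paper leaves implicit.
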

\begin{proof}
 Since any set of fill-in edges that triangulates a moral graph can be verified in polynomial time to have at most $\lambda$ edges or not, the minimum fill-in problem for moral graphs is in NP. Given a graph $G$ can be polynomially transformed to the corresponding moral graph $C(\hat{G})$, Lemma \ref{lm:mini_chain_fillin_npc} proves the NP-hardness of the problem. 
\end{proof}

\subsection{Treewidth}

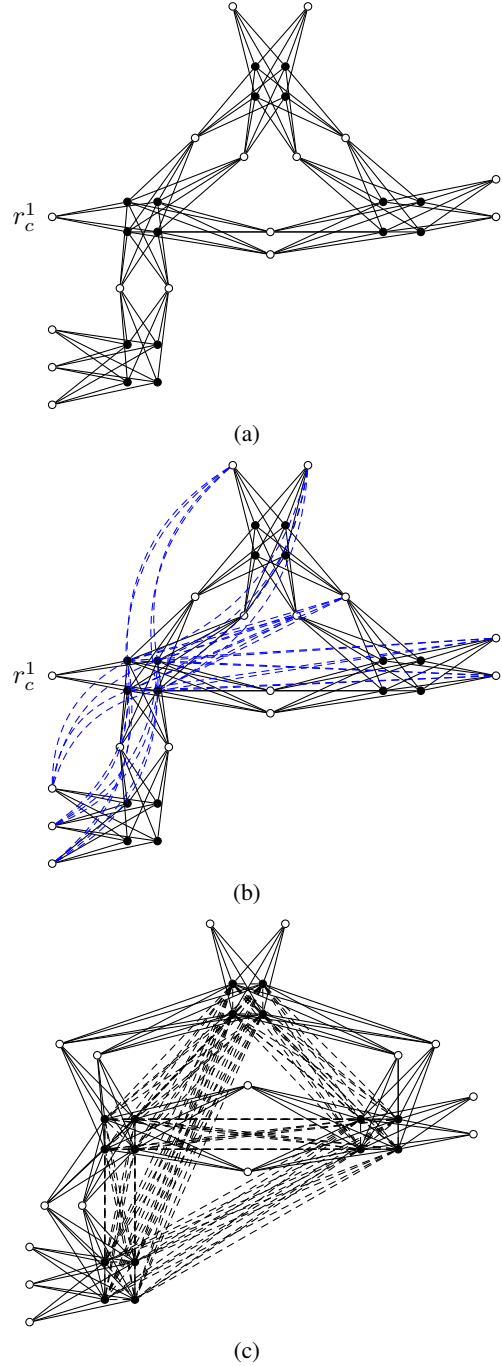
\begin{figure}
\centering
\subfigure[]{
\label{fg:g'_arnborg}
\begin{tikzpicture}[scale=1]
\begin{scope}[every node/.style={circle,draw,fill=black,minimum size=1mm,inner sep=1pt},>={Stealth[black]}]
	\node[fill=white] (E11) at (1,-0.25) {};
    \node[fill=white] (E12) at (0.35,-0.5) {};    
    \node[fill=white] (E21) at (-1,-0.25) {};
    \node[fill=white] (E22) at (-0.35,-0.5) {};    
    \node[fill=white] (E31) at (0,-1.8) {};
    \node[fill=white] (E32) at (0,-1.5) {};    
    \node[fill=white] (E41) at (-2,-2.25) {};
    \node[fill=white] (E42) at (-1.35,-2.25) {};
    
    \node (A1) at (-0.2,0.3) {};
    \node (A2) at (-0.2,0.7) {};
    \node (A3) at (0.2,0.3) {};
    \node (A4) at (0.2,0.7) {};
    \node[fill=white] (a1) at (-0.5,1.5) {};
    \node[fill=white] (a2) at (0.5,1.5) {};   
      
    \path [-] (A1) edge (a1);
    \path [-] (A1) edge (a2); 
    \path [-] (A2) edge (a1);
    \path [-] (A2) edge (a2); 
    \path [-] (A3) edge (a1);
    \path [-] (A3) edge (a2); 
    \path [-] (A4) edge (a1);
    \path [-] (A4) edge (a2); 
    
    \path [-] (A1) edge (E11);
    \path [-] (A1) edge (E12); 
    \path [-] (A2) edge (E11);
    \path [-] (A2) edge (E12);
    \path [-] (A3) edge (E11);
    \path [-] (A3) edge (E12);
    \path [-] (A4) edge (E11);
    \path [-] (A4) edge (E12);
    
    \path [-] (A1) edge (E21);
    \path [-] (A1) edge (E22); 
    \path [-] (A2) edge (E21);
    \path [-] (A2) edge (E22);
    \path [-] (A3) edge (E21);
    \path [-] (A3) edge (E22);
    \path [-] (A4) edge (E21);
    \path [-] (A4) edge (E22);
    
    \node (B1) at (1.5,-1.5) {};
    \node (B2) at (2,-1.5) {};
    \node (B3) at (1.5,-1.1) {};
    \node (B4) at (2,-1.1) {};
    \node[fill=white] (b1) at (3,-0.8) {};
    \node[fill=white] (b2) at (3,-1.3) {};
    
    \path [-] (B1) edge (b1);
    \path [-] (B1) edge (b2); 
    \path [-] (B2) edge (b1);
    \path [-] (B2) edge (b2); 
    \path [-] (B3) edge (b1);
    \path [-] (B3) edge (b2); 
    \path [-] (B4) edge (b1);
    \path [-] (B4) edge (b2);
    
    \path [-] (B1) edge (E11);
    \path [-] (B1) edge (E12); 
    \path [-] (B2) edge (E11);
    \path [-] (B2) edge (E12);
    \path [-] (B3) edge (E11);
    \path [-] (B3) edge (E12);
    \path [-] (B4) edge (E11);
    \path [-] (B4) edge (E12);
    
    \path [-] (B1) edge (E31);
    \path [-] (B1) edge (E32); 
    \path [-] (B2) edge (E31);
    \path [-] (B2) edge (E32);
    \path [-] (B3) edge (E31);
    \path [-] (B3) edge (E32);
    \path [-] (B4) edge (E31);
    \path [-] (B4) edge (E32);
    
    \node (C1) at (-1.5,-1.5) {};
    \node (C2) at (-1.9,-1.5) {};
    \node (C3) at (-1.5,-1.1) {};
    \node (C4) at (-1.9,-1.1) {};
    \node[fill=white,label=left:$r_c^1$] (c1) at (-2.9,-1.3) {};
    
    \path [-] (C1) edge (c1);
    \path [-] (C2) edge (c1);
    \path [-] (C3) edge (c1);
    \path [-] (C4) edge (c1);
       
    \path [-] (C1) edge (E21);
    \path [-] (C1) edge (E22); 
    \path [-] (C2) edge (E21);
    \path [-] (C2) edge (E22);
    \path [-] (C3) edge (E21);
    \path [-] (C3) edge (E22);
    \path [-] (C4) edge (E21);
    \path [-] (C4) edge (E22);
    
    \path [-] (C1) edge (E31);
    \path [-] (C1) edge (E32); 
    \path [-] (C2) edge (E31);
    \path [-] (C2) edge (E32);
    \path [-] (C3) edge (E31);
    \path [-] (C3) edge (E32);
    \path [-] (C4) edge (E31);
    \path [-] (C4) edge (E32);
    
    \path [-] (C1) edge (E41);
    \path [-] (C1) edge (E42); 
    \path [-] (C2) edge (E41);
    \path [-] (C2) edge (E42);
    \path [-] (C3) edge (E41);
    \path [-] (C3) edge (E42);
    \path [-] (C4) edge (E41);
    \path [-] (C4) edge (E42);
    
    \node (D1) at (-1.5,-3) {};
    \node (D2) at (-1.9,-3) {};
    \node (D3) at (-1.5,-3.5) {};
    \node (D4) at (-1.9,-3.5) {};
    \node[fill=white] (d1) at (-2.9,-2.8) {};
    \node[fill=white] (d2) at (-2.9,-3.3) {};
    \node[fill=white] (d3) at (-2.9,-3.8) {};
    
    \path [-] (D1) edge (d1);
    \path [-] (D1) edge (d2); 
    \path [-] (D1) edge (d3);
    \path [-] (D2) edge (d1);
    \path [-] (D2) edge (d2); 
    \path [-] (D2) edge (d3); 
    \path [-] (D3) edge (d1);
    \path [-] (D3) edge (d2); 
    \path [-] (D3) edge (d3);
    \path [-] (D4) edge (d1);
    \path [-] (D4) edge (d2);
    \path [-] (D4) edge (d3);
    
    \path [-] (D1) edge (E41);
    \path [-] (D1) edge (E42); 
    \path [-] (D2) edge (E41);
    \path [-] (D2) edge (E42);
    \path [-] (D3) edge (E41);
    \path [-] (D3) edge (E42);
    \path [-] (D4) edge (E41);
    \path [-] (D4) edge (E42);
\end{scope}
\end{tikzpicture}
}\hspace{1cm}
\subfigure[]{
\label{fg:g_hat_arnborg}
\begin{tikzpicture}[scale=1]
\begin{scope}[every node/.style={circle,draw,fill=black,minimum size=1mm,inner sep=1pt},>={Stealth[black]}]
    \node[fill=white] (E11) at (1,-0.25) {};
    \node[fill=white] (E12) at (0.35,-0.5) {};    
    \node[fill=white] (E21) at (-1,-0.25) {};
    \node[fill=white] (E22) at (-0.35,-0.5) {};    
    \node[fill=white] (E31) at (0,-1.8) {};
    \node[fill=white] (E32) at (0,-1.5) {};    
    \node[fill=white] (E41) at (-2,-2.25) {};
    \node[fill=white] (E42) at (-1.35,-2.25) {};
    
    \node (A1) at (-0.2,0.3) {};
    \node (A2) at (-0.2,0.7) {};
    \node (A3) at (0.2,0.3) {};
    \node (A4) at (0.2,0.7) {};
    \node[fill=white] (a1) at (-0.5,1.5) {};
    \node[fill=white] (a2) at (0.5,1.5) {};   
      
    \path [-] (A1) edge (a1);
    \path [-] (A1) edge (a2); 
    \path [-] (A2) edge (a1);
    \path [-] (A2) edge (a2); 
    \path [-] (A3) edge (a1);
    \path [-] (A3) edge (a2); 
    \path [-] (A4) edge (a1);
    \path [-] (A4) edge (a2); 
    
    \path [-] (A1) edge (E11);
    \path [-] (A1) edge (E12); 
    \path [-] (A2) edge (E11);
    \path [-] (A2) edge (E12);
    \path [-] (A3) edge (E11);
    \path [-] (A3) edge (E12);
    \path [-] (A4) edge (E11);
    \path [-] (A4) edge (E12);
    
    \path [-] (A1) edge (E21);
    \path [-] (A1) edge (E22); 
    \path [-] (A2) edge (E21);
    \path [-] (A2) edge (E22);
    \path [-] (A3) edge (E21);
    \path [-] (A3) edge (E22);
    \path [-] (A4) edge (E21);
    \path [-] (A4) edge (E22);
    
    \node (B1) at (1.5,-1.5) {};
    \node (B2) at (2,-1.5) {};
    \node (B3) at (1.5,-1.1) {};
    \node (B4) at (2,-1.1) {};
    \node[fill=white] (b1) at (3,-0.8) {};
    \node[fill=white] (b2) at (3,-1.3) {};
    
    \path [-] (B1) edge (b1);
    \path [-] (B1) edge (b2); 
    \path [-] (B2) edge (b1);
    \path [-] (B2) edge (b2); 
    \path [-] (B3) edge (b1);
    \path [-] (B3) edge (b2); 
    \path [-] (B4) edge (b1);
    \path [-] (B4) edge (b2);
    
    \path [-] (B1) edge (E11);
    \path [-] (B1) edge (E12); 
    \path [-] (B2) edge (E11);
    \path [-] (B2) edge (E12);
    \path [-] (B3) edge (E11);
    \path [-] (B3) edge (E12);
    \path [-] (B4) edge (E11);
    \path [-] (B4) edge (E12);
    
    \path [-] (B1) edge (E31);
    \path [-] (B1) edge (E32); 
    \path [-] (B2) edge (E31);
    \path [-] (B2) edge (E32);
    \path [-] (B3) edge (E31);
    \path [-] (B3) edge (E32);
    \path [-] (B4) edge (E31);
    \path [-] (B4) edge (E32);
    
    \node (C1) at (-1.5,-1.5) {};
    \node (C2) at (-1.9,-1.5) {};
    \node (C3) at (-1.5,-1.1) {};
    \node (C4) at (-1.9,-1.1) {};
    \node[fill=white,label=left:$r_c^1$] (c1) at (-2.9,-1.3) {};
    
    \path [-] (C1) edge (c1);
    \path [-] (C2) edge (c1);
    \path [-] (C3) edge (c1);
    \path [-] (C4) edge (c1);
       
    \path [-] (C1) edge (E21);
    \path [-] (C1) edge (E22); 
    \path [-] (C2) edge (E21);
    \path [-] (C2) edge (E22);
    \path [-] (C3) edge (E21);
    \path [-] (C3) edge (E22);
    \path [-] (C4) edge (E21);
    \path [-] (C4) edge (E22);
    
    \path [-] (C1) edge (E31);
    \path [-] (C1) edge (E32); 
    \path [-] (C2) edge (E31);
    \path [-] (C2) edge (E32);
    \path [-] (C3) edge (E31);
    \path [-] (C3) edge (E32);
    \path [-] (C4) edge (E31);
    \path [-] (C4) edge (E32);
    
    \path [-] (C1) edge (E41);
    \path [-] (C1) edge (E42); 
    \path [-] (C2) edge (E41);
    \path [-] (C2) edge (E42);
    \path [-] (C3) edge (E41);
    \path [-] (C3) edge (E42);
    \path [-] (C4) edge (E41);
    \path [-] (C4) edge (E42);
    
    \node (D1) at (-1.5,-3) {};
    \node (D2) at (-1.9,-3) {};
    \node (D3) at (-1.5,-3.5) {};
    \node (D4) at (-1.9,-3.5) {};
    \node[fill=white] (d1) at (-2.9,-2.8) {};
    \node[fill=white] (d2) at (-2.9,-3.3) {};
    \node[fill=white] (d3) at (-2.9,-3.8) {};
    
    \path [-] (D1) edge (d1);
    \path [-] (D1) edge (d2); 
    \path [-] (D1) edge (d3);
    \path [-] (D2) edge (d1);
    \path [-] (D2) edge (d2); 
    \path [-] (D2) edge (d3); 
    \path [-] (D3) edge (d1);
    \path [-] (D3) edge (d2); 
    \path [-] (D3) edge (d3);
    \path [-] (D4) edge (d1);
    \path [-] (D4) edge (d2);
    \path [-] (D4) edge (d3);
    
    \path [-] (D1) edge (E41);
    \path [-] (D1) edge (E42); 
    \path [-] (D2) edge (E41);
    \path [-] (D2) edge (E42);
    \path [-] (D3) edge (E41);
    \path [-] (D3) edge (E42);
    \path [-] (D4) edge (E41);
    \path [-] (D4) edge (E42);

    \path [densely dashed,bend right,blue] (a1) edge (C1);
    \path [densely dashed,bend left,blue] (a2) edge (C1);   
    \path [densely dashed,blue] (b1) edge (C1);
    \path [densely dashed,blue] (b2) edge (C1); 
    \path [densely dashed,bend left,blue] (d1) edge (C1);
    \path [densely dashed,bend right,blue] (d2) edge (C1);
    \path [densely dashed,bend right,blue] (d3) edge (C1);
    \path [densely dashed,blue] (E11) edge (C1);
    \path [densely dashed,blue] (E12) edge (C1);
    
    \path [densely dashed,bend right,blue] (a1) edge (C2);
    \path [densely dashed,bend left,blue] (a2) edge (C2);   
    \path [densely dashed,blue] (b1) edge (C2);
    \path [densely dashed,blue] (b2) edge (C2); 
    \path [densely dashed,bend left,blue] (d1) edge (C2);
    \path [densely dashed,bend right,blue] (d2) edge (C2);
    \path [densely dashed,bend right,blue] (d3) edge (C2);
    \path [densely dashed,blue] (E11) edge (C2);
    \path [densely dashed,blue] (E12) edge (C2);
    
    \path [densely dashed,bend right,blue] (a1) edge (C3);
    \path [densely dashed,bend left,blue] (a2) edge (C3);   
    \path [densely dashed,blue] (b1) edge (C3);
    \path [densely dashed,blue] (b2) edge (C3); 
    \path [densely dashed,bend left,blue] (d1) edge (C3);
    \path [densely dashed,bend right,blue] (d2) edge (C3);
    \path [densely dashed,bend right,blue] (d3) edge (C3);
    \path [densely dashed,blue] (E11) edge (C3);
    \path [densely dashed,blue] (E12) edge (C3);
    
    \path [densely dashed,bend right,blue] (a1) edge (C4);
    \path [densely dashed,bend left,blue] (a2) edge (C4);   
    \path [densely dashed,blue] (b1) edge (C4);
    \path [densely dashed,blue] (b2) edge (C4); 
    \path [densely dashed,bend left,blue] (d1) edge (C4);
    \path [densely dashed,bend right,blue] (d2) edge (C4);
    \path [densely dashed,bend right,blue] (d3) edge (C4);
    \path [densely dashed,blue] (E11) edge (C4);
    \path [densely dashed,blue] (E12) edge (C4);
\end{scope}
\end{tikzpicture}
}\hspace{1cm}
\subfigure[]{
\label{fg:g_hat_wrs_sub_arnborg}
\begin{tikzpicture}[scale=1]
\begin{scope}[every node/.style={circle,draw,fill=black,minimum size=1mm,inner sep=1pt},>={Stealth[black]}]
    \node[fill=white] (E11) at (2,-0.25) {};
    \node[fill=white] (E12) at (2.5,-0.1) {};    
    \node[fill=white] (E21) at (-2,-0.25) {};
    \node[fill=white] (E22) at (-2.5,-0.1) {};    
    \node[fill=white] (E31) at (0,-1.8) {};
    \node[fill=white] (E32) at (0,-0.65) {};    
    \node[fill=white] (E41) at (-2.7,-2.25) {};
    \node[fill=white] (E42) at (-2.2,-2.25) {};
    
    \node (A1) at (-0.2,0.3) {};
    \node (A2) at (-0.2,0.7) {};
    \node (A3) at (0.2,0.3) {};
    \node (A4) at (0.2,0.7) {};
    \node[fill=white] (a1) at (-0.5,1.5) {};
    \node[fill=white] (a2) at (0.5,1.5) {};   
      
    \path [-] (A1) edge (a1);
    \path [-] (A1) edge (a2); 
    \path [-] (A2) edge (a1);
    \path [-] (A2) edge (a2); 
    \path [-] (A3) edge (a1);
    \path [-] (A3) edge (a2); 
    \path [-] (A4) edge (a1);
    \path [-] (A4) edge (a2); 
    
    \path [-] (A1) edge (E11);
    \path [-] (A1) edge (E12); 
    \path [-] (A2) edge (E11);
    \path [-] (A2) edge (E12);
    \path [-] (A3) edge (E11);
    \path [-] (A3) edge (E12);
    \path [-] (A4) edge (E11);
    \path [-] (A4) edge (E12);
    
    \path [-] (A1) edge (E21);
    \path [-] (A1) edge (E22); 
    \path [-] (A2) edge (E21);
    \path [-] (A2) edge (E22);
    \path [-] (A3) edge (E21);
    \path [-] (A3) edge (E22);
    \path [-] (A4) edge (E21);
    \path [-] (A4) edge (E22);
    
    \node (B1) at (1.5,-1.5) {};
    \node (B2) at (2,-1.5) {};
    \node (B3) at (1.5,-1.1) {};
    \node (B4) at (2,-1.1) {};
    \node[fill=white] (b1) at (3,-0.8) {};
    \node[fill=white] (b2) at (3,-1.3) {};
    
    \path [-] (B1) edge (b1);
    \path [-] (B1) edge (b2); 
    \path [-] (B2) edge (b1);
    \path [-] (B2) edge (b2); 
    \path [-] (B3) edge (b1);
    \path [-] (B3) edge (b2); 
    \path [-] (B4) edge (b1);
    \path [-] (B4) edge (b2);
    
    \path [-] (B1) edge (E11);
    \path [-] (B1) edge (E12); 
    \path [-] (B2) edge (E11);
    \path [-] (B2) edge (E12);
    \path [-] (B3) edge (E11);
    \path [-] (B3) edge (E12);
    \path [-] (B4) edge (E11);
    \path [-] (B4) edge (E12);
    
    \path [-] (B1) edge (E31);
    \path [-] (B1) edge (E32); 
    \path [-] (B2) edge (E31);
    \path [-] (B2) edge (E32);
    \path [-] (B3) edge (E31);
    \path [-] (B3) edge (E32);
    \path [-] (B4) edge (E31);
    \path [-] (B4) edge (E32);
    
    \node (C1) at (-1.5,-1.5) {};
    \node (C2) at (-1.9,-1.5) {};
    \node (C3) at (-1.5,-1.1) {};
    \node (C4) at (-1.9,-1.1) {};
           
    \path [-] (C1) edge (E21);
    \path [-] (C1) edge (E22); 
    \path [-] (C2) edge (E21);
    \path [-] (C2) edge (E22);
    \path [-] (C3) edge (E21);
    \path [-] (C3) edge (E22);
    \path [-] (C4) edge (E21);
    \path [-] (C4) edge (E22);
    
    \path [-] (C1) edge (E31);
    \path [-] (C1) edge (E32); 
    \path [-] (C2) edge (E31);
    \path [-] (C2) edge (E32);
    \path [-] (C3) edge (E31);
    \path [-] (C3) edge (E32);
    \path [-] (C4) edge (E31);
    \path [-] (C4) edge (E32);
    
    \path [-] (C1) edge (E41);
    \path [-] (C1) edge (E42); 
    \path [-] (C2) edge (E41);
    \path [-] (C2) edge (E42);
    \path [-] (C3) edge (E41);
    \path [-] (C3) edge (E42);
    \path [-] (C4) edge (E41);
    \path [-] (C4) edge (E42);
    
    \node (D1) at (-1.5,-3) {};
    \node (D2) at (-1.9,-3) {};
    \node (D3) at (-1.5,-3.5) {};
    \node (D4) at (-1.9,-3.5) {};
    \node[fill=white] (d1) at (-2.9,-2.8) {};
    \node[fill=white] (d2) at (-2.9,-3.3) {};
    \node[fill=white] (d3) at (-2.9,-3.8) {};
    
    \path [-] (D1) edge (d1);
    \path [-] (D1) edge (d2); 
    \path [-] (D1) edge (d3);
    \path [-] (D2) edge (d1);
    \path [-] (D2) edge (d2); 
    \path [-] (D2) edge (d3); 
    \path [-] (D3) edge (d1);
    \path [-] (D3) edge (d2); 
    \path [-] (D3) edge (d3);
    \path [-] (D4) edge (d1);
    \path [-] (D4) edge (d2);
    \path [-] (D4) edge (d3);
    
    \path [-] (D1) edge (E41);
    \path [-] (D1) edge (E42); 
    \path [-] (D2) edge (E41);
    \path [-] (D2) edge (E42);
    \path [-] (D3) edge (E41);
    \path [-] (D3) edge (E42);
    \path [-] (D4) edge (E41);
    \path [-] (D4) edge (E42);
    
    \path [densely dashed] (A1) edge (A2);
    \path [densely dashed] (A1) edge (A3);
    \path [densely dashed] (A1) edge (A4);
    \path [densely dashed] (A2) edge (A3);
    \path [densely dashed] (A2) edge (A4);
    \path [densely dashed] (A3) edge (A4);
    
    \path [densely dashed] (B1) edge (B2);
    \path [densely dashed] (B1) edge (B3);
    \path [densely dashed] (B1) edge (B4);
    \path [densely dashed] (B2) edge (B3);
    \path [densely dashed] (B2) edge (B4);
    \path [densely dashed] (B3) edge (B4);
    
    \path [densely dashed] (C1) edge (C2);
    \path [densely dashed] (C1) edge (C3);
    \path [densely dashed] (C1) edge (C4);
    \path [densely dashed] (C2) edge (C3);
    \path [densely dashed] (C2) edge (C4);
    \path [densely dashed] (C3) edge (C4);
    
    \path [densely dashed] (D1) edge (D2);
    \path [densely dashed] (D1) edge (D3);
    \path [densely dashed] (D1) edge (D4);
    \path [densely dashed] (D2) edge (D3);
    \path [densely dashed] (D2) edge (D4);
    \path [densely dashed] (D3) edge (D4);
    
    \path [densely dashed] (A1) edge (B1);
    \path [densely dashed] (A1) edge (B2);
    \path [densely dashed] (A1) edge (B3);
    \path [densely dashed] (A1) edge (B4);
    \path [densely dashed] (A2) edge (B1);
    \path [densely dashed] (A2) edge (B2);
    \path [densely dashed] (A2) edge (B3);
    \path [densely dashed] (A2) edge (B4);
    \path [densely dashed] (A3) edge (B1);
    \path [densely dashed] (A3) edge (B2);
    \path [densely dashed] (A3) edge (B3);
    \path [densely dashed] (A3) edge (B4);
    \path [densely dashed] (A4) edge (B1);
    \path [densely dashed] (A4) edge (B2);
    \path [densely dashed] (A4) edge (B3);
    \path [densely dashed] (A4) edge (B4);
    
    \path [densely dashed] (A1) edge (C1);
    \path [densely dashed] (A1) edge (C2);
    \path [densely dashed] (A1) edge (C3);
    \path [densely dashed] (A1) edge (C4);
    \path [densely dashed] (A2) edge (C1);
    \path [densely dashed] (A2) edge (C2);
    \path [densely dashed] (A2) edge (C3);
    \path [densely dashed] (A2) edge (C4);
    \path [densely dashed] (A3) edge (C1);
    \path [densely dashed] (A3) edge (C2);
    \path [densely dashed] (A3) edge (C3);
    \path [densely dashed] (A3) edge (C4);
    \path [densely dashed] (A4) edge (C1);
    \path [densely dashed] (A4) edge (C2);
    \path [densely dashed] (A4) edge (C3);
    \path [densely dashed] (A4) edge (C4);
    
    \path [densely dashed] (A1) edge (D1);
    \path [densely dashed] (A1) edge (D2);
    \path [densely dashed] (A1) edge (D3);
    \path [densely dashed] (A1) edge (D4);
    \path [densely dashed] (A2) edge (D1);
    \path [densely dashed] (A2) edge (D2);
    \path [densely dashed] (A2) edge (D3);
    \path [densely dashed] (A2) edge (D4);
    \path [densely dashed] (A3) edge (D1);
    \path [densely dashed] (A3) edge (D2);
    \path [densely dashed] (A3) edge (D3);
    \path [densely dashed] (A3) edge (D4);
    \path [densely dashed] (A4) edge (D1);
    \path [densely dashed] (A4) edge (D2);
    \path [densely dashed] (A4) edge (D3);
    \path [densely dashed] (A4) edge (D4);
    
    \path [densely dashed] (B1) edge (C1);
    \path [densely dashed] (B1) edge (C2);
    \path [densely dashed] (B1) edge (C3);
    \path [densely dashed] (B1) edge (C4);
    \path [densely dashed] (B2) edge (C1);
    \path [densely dashed] (B2) edge (C2);
    \path [densely dashed] (B2) edge (C3);
    \path [densely dashed] (B2) edge (C4);
    \path [densely dashed] (B3) edge (C1);
    \path [densely dashed] (B3) edge (C2);
    \path [densely dashed] (B3) edge (C3);
    \path [densely dashed] (B3) edge (C4);
    \path [densely dashed] (B4) edge (C1);
    \path [densely dashed] (B4) edge (C2);
    \path [densely dashed] (B4) edge (C3);
    \path [densely dashed] (B4) edge (C4);
    
    \path [densely dashed] (B1) edge (D1);
    \path [densely dashed] (B1) edge (D2);
    \path [densely dashed] (B1) edge (D3);
    \path [densely dashed] (B1) edge (D4);
    \path [densely dashed] (B2) edge (D1);
    \path [densely dashed] (B2) edge (D2);
    \path [densely dashed] (B2) edge (D3);
    \path [densely dashed] (B2) edge (D4);
    \path [densely dashed] (B3) edge (D1);
    \path [densely dashed] (B3) edge (D2);
    \path [densely dashed] (B3) edge (D3);
    \path [densely dashed] (B3) edge (D4);
    \path [densely dashed] (B4) edge (D1);
    \path [densely dashed] (B4) edge (D2);
    \path [densely dashed] (B4) edge (D3);
    \path [densely dashed] (B4) edge (D4);
    
    \path [densely dashed] (D1) edge (C1);
    \path [densely dashed] (D1) edge (C2);
    \path [densely dashed] (D1) edge (C3);
    \path [densely dashed] (D1) edge (C4);
    \path [densely dashed] (D2) edge (C1);
    \path [densely dashed] (D2) edge (C2);
    \path [densely dashed] (D2) edge (C3);
    \path [densely dashed] (D2) edge (C4);
    \path [densely dashed] (D3) edge (C1);
    \path [densely dashed] (D3) edge (C2);
    \path [densely dashed] (D3) edge (C3);
    \path [densely dashed] (D3) edge (C4);
    \path [densely dashed] (D4) edge (C1);
    \path [densely dashed] (D4) edge (C2);
    \path [densely dashed] (D4) edge (C3);
    \path [densely dashed] (D4) edge (C4);
\end{scope}
\end{tikzpicture}
}
\caption{\subref{fg:g'_arnborg} the bipartite graph $G'$ transformed from $G$ (Figure \ref{fg:g}) by steps A1-A3; \subref{fg:g_hat_arnborg} the bipartite graph $\hat{G}$ transformed from $G'$ for a given node $c$ by the additional step L4;  \subref{fg:g_hat_wrs_sub_arnborg} the subgraph obtained from $C(\hat{G})$ by removing a simplicial node $r_c^1$ and its excess (specified in Lemma \ref{lm:make_cg'_moral_arnborg}).}

\label{fg:arnborg_reduction}
\end{figure}

\cite{arnborg1987complexity} reduced the MCLA problem to the decision problem of whether or not a graph has a bounded treewidth. Below are the details of \cite{arnborg1987complexity}'s polynomial transformation from a graph $G=(V,E)$ (Figure \ref{fg:g}) that is an instance of the MCLA problem to a bipartite graph $G'=(P\sqcup Q,E')$ (Figure \ref{fg:g'_arnborg}), w.r.t. which $C(G')$ is an instance of the treewidth problem for graphs.
\begin{enumerate}[label=A\arabic*.]
    \item $P = \{u_i \mid i \in \{1, \dots, \Delta(G)+1\}, u \in V\}$,
    \item $Q=\{e_i^j \mid j \in \{1,2\}, e_i \in E\} \cup \{R(u) \mid u \in V\}$, where $R(u) = \{r_u^j \mid j \in \{1, \dots, \Delta(G)+1-d_G(u)\}\}$,
    \item $E'=\{ue_i^j \mid j \in \{1,2\}, e_i \in E \text{ s.t. } u \in V(e_i)\}\cup \{uv \mid v \in R(u), u \in P\}$.
\end{enumerate}
The transformation is similar to that of \cite{yannakakis1981computing} but produces multiple copies for the nodes in $G$. The bipartite graph $G'$ built via the above three steps has no saturated node (unless $G=uv$) for the same reason discussed in the preceding subsection, so $C(G')$ is not moral. To make it moral, the following step 
\begin{enumerate}[label=\^{L}\arabic*.]
     \setcounter{enumi}{3}
    \item for a given node $u \in V$, let $\hat{E} = E' \cup \{S(u^j) \mid j \in [1,\Delta(G)+1]\}$, where $u^j \in P$ is the corresponding node to $u$ and $S(u^j)=\{u^jv \notin E' \mid v \in Q\}$
\end{enumerate}
is applied to each copy $u^j$ of a given node $u \in V$ to make valid of any residual node of $u$ being simplicial in $\hat{G}$ (Figure \ref{fg:g_hat_arnborg}). Although this additional step is applied to all copies, the polynomial complexity is guaranteed by the bounded number $\Delta(G)+1$ of copies of $u$. Again, having a simplicial node is necessary but not sufficient for being moral, so the next lemma proves the morality of $C(\hat{G})$. 

\begin{lemma}
\label{lm:make_cg'_moral_arnborg}
Let $\hat{G}=(P\sqcup Q,\hat{E})$ be the bipartite graph constructed from a graph $G=(V,E)$ by steps A1-A3 \& \^{L}4 for a given node $w \in V$. Then $C(\hat{G})$ is moral. 
\end{lemma}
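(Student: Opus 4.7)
The plan is to mirror the argument of Lemma \ref{lm:make_cg'_moral} and adapt it to the multi-copy Arnborg construction, in which $P$ now contains $\Delta(G)+1$ copies of each vertex of $G$. The crucial feature is that step \^{L}4 saturates every copy $w^1, \ldots, w^{\Delta(G)+1}$ of $w$, so in $\hat{G}$ each $w^j$ is adjacent to all of $Q$. I would first focus on the residual node $r_w^1 \in R(w) \subseteq Q$: by step A3 its neighbours in $\hat{G}$ are exactly the copies $w^1, \ldots, w^{\Delta(G)+1}$, and the partition completion then augments this with $Q - r_w^1$, giving $N_{C(\hat{G})}(r_w^1) = \{w^1, \ldots, w^{\Delta(G)+1}\} \cup (Q - r_w^1)$. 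This set is a clique in $C(\hat{G})$ because $\{w^1, \ldots, w^{\Delta(G)+1}\}$ is completed on the $P$-side, $Q - r_w^1$ is completed on the $Q$-side, and every $w^j$ is joined to every $v \in Q - r_w^1$ by \^{L}4. Hence $r_w^1$ is simplicial in $C(\hat{G})$.

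Next I would exhibit an excess of $r_w^1$ whose removal returns a known-moral subgraph. The natural choice is
\[ \epsilon(r_w^1) = \bigcup_{j=1}^{\Delta(G)+1} S(w^j) \;\cup\; \{uv \mid u,v \in Q - r_w^1\}, \]
i.e., every saturating edge added by \^{L}4 for each copy together with every $Q$-side partition-completion edge not incident to $r_w^1$. Both families lie in $E(C(\hat{G})[N(r_w^1)])$: each $w^j v \in S(w^j)$ has $w^j \in N(r_w^1)$ via A3 and $v \in Q - R(w) \subseteq N(r_w^1)$ via the $Q$-side completion; the $Q$-side edges clearly have both endpoints in $N(r_w^1)$. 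Deleting $r_w^1$ together with $\epsilon(r_w^1)$ erases precisely the contributions of \^{L}4 and of the $Q$-side partition completion, leaving the graph $C_p(G' - r_w^1)$ depicted in Figure \ref{fg:g_hat_wrs_sub_arnborg}.

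To close, I would invoke the existing chain of results: Lemma \ref{lm:cp_bip_chordal} says $C_p(G' - r_w^1)$ is chordal, Corollary \ref{cor:chordal_implies_moral} upgrades this to moral, and Theorem \ref{thm:equivalent} yields WRS. Applying Definition \ref{def:wrs} with simplicial vertex $r_w^1$ and excess $\epsilon(r_w^1)$ then shows $C(\hat{G})$ itself is WRS, and a final appeal to Theorem \ref{thm:equivalent} delivers morality.

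The main piece of care, rather than a deep obstacle, is the bookkeeping in the second paragraph: because there are now $\Delta(G)+1$ copies of $w$ to saturate, $\epsilon(r_w^1)$ must account for every edge contributed by the union of the \^{L}4 steps while preserving every original $G'$-edge, so that the residual graph matches $C_p(G' - r_w^1)$ exactly. Once this accounting is verified, the argument proceeds structurally identically to the single-copy situation handled by Lemma \ref{lm:make_cg'_moral}.
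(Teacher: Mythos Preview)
Your proposal is correct and follows essentially the same approach as the paper: the paper also identifies $r_w^1$ as simplicial in $C(\hat{G})$, takes the excess $\epsilon(r_w^1)=\{S(w^j)\mid j\in[1,\Delta(G)+1]\}\cup\{uv\mid u,v\in Q,\ u,v\neq r_w^1\}$, observes that the resulting subgraph is $C_p(G'-r_w^1)$, and then invokes Lemma~\ref{lm:cp_bip_chordal}, Corollary~\ref{cor:chordal_implies_moral}, and Theorem~\ref{thm:equivalent}. Your write-up is simply more explicit about why $r_w^1$ is simplicial and why the excess sits inside $E(C(\hat{G})[N(r_w^1)])$, which is helpful but does not change the argument.
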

\begin{proof}
It is easy to see that $r_w^1$ is a simplicial node in $C(\hat{G})$. By removing $r_w^1$ and its excess $\epsilon(r_w^1)=\{S(w^j) \mid j \in [1,\Delta(G)+1]\} \cup \{uv \mid u,v \in Q \text{ s.t. } u,v \neq r_w^1\}$, the resulting subgraph (Figure \ref{fg:g_hat_wrs_sub_arnborg}) is the same as $C_p(G'-r_w^1)$. By Lemma \ref{lm:cp_bip_chordal}, Corollary \ref{cor:chordal_implies_moral} and Theorem \ref{thm:equivalent}, $C(\hat{G})$ is moral. 
\end{proof}

Before proceeding, it is necessary to draw the connection between an ordering w.r.t. which a chain graph $G'$ is defined and the perfect elimination ordering (PEO) of the corresponding triangulated graph $C(G')$.
\begin{lemma}
\label{lm:reverse_chain_ord}
Let $G'=(P\sqcup Q,E')$ be a chain graph w.r.t. an ordering $\alpha$ of $P$ and $\pi_P$ be the reverse of $\alpha$. Then for any ordering $\pi_Q$ of $Q$, the elimination ordering $\{\pi_P,\pi_Q\}$ is perfect for the graph $C(G')$. 
\end{lemma}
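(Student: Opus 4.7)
The plan is to directly verify that eliminating the vertices in the order $\pi_P$ followed by $\pi_Q$ produces no fill-in, by checking that each vertex is simplicial at the moment it is eliminated. Since in $C(G')$ both $P$ and $Q$ are cliques with $G'$'s bipartite edges crossing between them, the only nontrivial obligation at each step is that the cross-neighbours of the vertex being eliminated form a clique with its same-side neighbours, and this is what the chain condition on $\alpha$ is designed to guarantee.

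Concretely, I would proceed by induction on $i \in [1, |P|]$ and show that $\pi_P(i) = \alpha(|P|-i+1)$ is simplicial in $C(G')^{i-1}$. At step $i$, the already-removed vertices are $\alpha(|P|), \alpha(|P|-1), \dots, \alpha(|P|-i+2)$, so the neighbours of $\alpha(|P|-i+1)$ in the remaining graph split into the $P$-part $\{\alpha(1), \dots, \alpha(|P|-i)\}$ and the $Q$-part $N_{G'}(\alpha(|P|-i+1))$. The $P$-part is a sub-clique of the clique $P$, the $Q$-part is a sub-clique of the clique $Q$, and for any $\alpha(j)$ with $j \le |P|-i$ and any $q \in N_{G'}(\alpha(|P|-i+1))$ the chain inclusion $N_{G'}(\alpha(|P|-i+1)) \subseteq N_{G'}(\alpha(j))$ gives $\alpha(j)q \in E'$; hence the full neighbourhood is a clique and no fill-in is introduced.

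Once all of $P$ has been eliminated, the graph that remains is the induced subgraph $C(G')[Q]$, which by the partition completion is a clique. Any elimination ordering of a clique is trivially perfect, so $\pi_Q$ eliminates the $Q$-vertices in order without introducing fill-in. Combining the two phases yields that $\{\pi_P, \pi_Q\}$ is a PEO of $C(G')$.

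The main thing to get right is the indexing direction: because the chain inclusion reads $N_{G'}(\alpha(|P|)) \subseteq \dots \subseteq N_{G'}(\alpha(1))$, eliminating in the order $\alpha$ would \emph{not} work in general (a late-eliminated $\alpha(j)$ could have a $Q$-neighbour not shared by an earlier removed $\alpha(k)$ with $k>j$), whereas \emph{reversing} $\alpha$ ensures that when $\alpha(|P|-i+1)$ is eliminated, all still-present $P$-vertices $\alpha(j)$ have $j < |P|-i+1$ and therefore dominate its $Q$-neighbourhood. Apart from carefully tracking this, the argument is essentially bookkeeping and should pose no real obstacle.
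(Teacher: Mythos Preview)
Your proposal is correct and follows essentially the same approach as the paper: show that each $\alpha(|P|-i+1)$ is simplicial at the moment of elimination by splitting its neighbourhood into a $P$-part (a clique by partition completion) and a $Q$-part (a clique by partition completion), with the chain inclusion $N_{G'}(\alpha(|P|-i+1)) \subseteq N_{G'}(\alpha(j))$ for $j<|P|-i+1$ supplying the cross-edges, and then observe that what remains after exhausting $P$ is the clique on $Q$. The paper states the base case and then appeals to ``the same argument'' for the inductive step, whereas you spell out the general step explicitly; the content is the same.
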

\begin{proof}
The neighbour set $N_{C(G')}(\alpha(|P|)) = \{P \setminus \alpha(|P|)\} \cup N_{G'}(\alpha(|P|))$, where each of the two subsets is a clique because of the partition completion. $G'$ is a chain graph implies that for each $i \in [1, |P|-1]$, $N_{G'}(\alpha(|P|)) \subseteq N_{G'}(\alpha(i))$. It follows that each $\alpha(i)$ is adjacent to all nodes in $N_{G'}(\alpha(|P|))$, so $\alpha(|P|)$ is simplicial in $C(G')$. By the same argument, the node $\alpha(|P|-1)$ is simplicial in the subgraph $C(G') - \alpha(|P|)$. Hence, the subgraph of $C(G')$ induced by $P$ can be eliminated recursively according to $\pi_P$. The remaining part is a complete subgraph over $Q$. Hence, any ordering of $Q$ appends to $\pi_P$ forms a PEO of $C(G')$. 
\end{proof}

It has been shown that the steps A1-A3 \& \^L4 and partition completion polynomially transform an instance of the restricted MCLA problem to an instance of the treewidth problem for moral graphs. Based on this transformation, the next lemma proves that a \textit{Yes} answer to the restricted MCLA problem is also a \textit{Yes} answer to the treewidth problem for moral graphs and vice versa. Define the \textit{linear cut value} of $G$ w.r.t. an ordering $\alpha$ as $\max_{1 \le i < |V|} |\{uv \in E \mid \alpha(u) \le i < \alpha(v)\}|$. 
\begin{lemma}
\label{lm:treewdith_nphard}
Given a graph $G=(V,E)$ and a positive integer $k \le |V|$, for any node $v \in V$ the minimum linear cut value w.r.t. an ordering $\alpha$ of $G$ is $k$ with $\alpha^{-1}(v)=|V|$ if and only if the treewidth of the corresponding moral graph $C(\hat{G})$ is $\omega = (\Delta(G)+1)\times (|V|+1)+k$.
\end{lemma}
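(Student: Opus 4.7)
The plan is to mimic \cite{arnborg1987complexity}'s original treewidth reduction, replacing its triangulation argument by chain completions justified for the moral graph $C(\hat{G})$ via Lemmas \ref{lm:yannakakis_chain_chordal}, \ref{lm:reverse_chain_ord}, and \ref{lm:make_cg'_moral_arnborg}. For the forward direction, given an ordering $\alpha$ of $V$ with $\alpha^{-1}(v)=|V|$ and linear cut value $k$, I would construct a block ordering $\alpha'$ of $P$ that lists the $\Delta(G)+1$ copies of $v$ first, then the copies of $\alpha(|V|-1)$, continuing in reverse $\alpha$-order down to the copies of $\alpha(1)$. Because step \^{L}4 saturates every copy of $v$ in $\hat{G}$, placing their block at the head of $\alpha'$ is feasible for a chain ordering; adding the minimal fill-in set $F_{\hat{G}}(\alpha')$ yields a chain graph, and by Lemmas \ref{lm:yannakakis_chain_chordal} and \ref{lm:reverse_chain_ord}, $C(\hat{G})+F_{\hat{G}}(\alpha')$ is chordal with the reverse of $\alpha'$ followed by any ordering of $Q$ forming a perfect elimination ordering.

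The central computation is the size of the clique generated when the first copy in block $m$ (a copy of $\alpha(m)$) is eliminated. At that moment $P$ still contains all $(|V|-m+1)(\Delta(G)+1)$ copies of $\alpha(m),\dots,\alpha(|V|)$, forming a clique by partition completion, and the chain-completed $Q$-neighborhood of the eliminated node consists of every $q \in Q$ whose leftmost $P$-endpoint (in $\alpha$) lies in $\{\alpha(1),\dots,\alpha(m)\}$. Combining the residual contribution $\sum_{m'\le m}(\Delta(G)+1-d_G(\alpha(m')))$ with twice the number of edges incident to $\{\alpha(1),\dots,\alpha(m)\}$, the $d_G$ terms cancel to leave $m(\Delta(G)+1) + \mathrm{cut}_\alpha(m)$, so the clique has exactly $(\Delta(G)+1)(|V|+1) + \mathrm{cut}_\alpha(m)$ vertices. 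Subsequent positions within block $m$ and the residual $Q$-clique produced after $P$ is fully eliminated are strictly smaller, so the maximum clique size in this triangulation equals $(\Delta(G)+1)(|V|+1) + \max_m \mathrm{cut}_\alpha(m) = (\Delta(G)+1)(|V|+1) + k$, yielding the forward bound on $\omega$.

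For the converse, I would show that any triangulation of $C(\hat{G})$ achieving treewidth $\omega$ admits a perfect elimination ordering in which all $P$-nodes precede every $Q$-node, without increasing the maximum clique size. This relies on the fact that partition completion makes $P$ a clique, so a $Q$-node pushed past a $P$-node in a PEO merely inherits the $P$-clique and cannot produce a larger simplicial clique than some $P$-node already does. Once a $P$-first PEO is obtained, it corresponds to a chain completion of $\hat{G}$ with respect to the reverse $P$-ordering, and a further exchange argument using the automorphism that permutes copies of a fixed $u \in V$ shows that the $P$-ordering may be taken block contiguous with $v$'s copies first. Reading off the blocks yields a $V$-ordering $\alpha$ with $\alpha^{-1}(v)=|V|$ to which the clique-size identity from the forward direction applies in reverse, giving $\max_m \mathrm{cut}_\alpha(m) = k$.

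The main technical obstacle is the exchange argument in the converse, and in particular verifying that the extra saturation edges introduced by \^{L}4 obstruct neither the interchange of $Q$-nodes past $P$-nodes nor the re-sorting of $P$-copies into contiguous blocks. Since those edges attach only to copies of $v$, which are already forced by saturation to dominate every other copy in any chain ordering, the exchange arguments should carry through with careful bookkeeping, confirming that the computed maximum clique matches $\omega$ exactly and not a shifted quantity.
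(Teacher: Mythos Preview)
Your proposal is correct and follows essentially the same route as the paper: build the block ordering on $P$ from $\alpha$, take the minimal chain completion, invoke Lemmas \ref{lm:yannakakis_chain_chordal} and \ref{lm:reverse_chain_ord} to obtain a PEO, and compute the simplicial clique at the first copy of each block via the cancellation $\sum_{m'\le m}d_G(\alpha(m')) = 2|E_2^m|+|E_1^m|$, arriving at $(\Delta(G)+1)(|V|+1)+\mathrm{cut}_\alpha(m)$---this is exactly the paper's equation (\ref{eq:max_clique_linear_cut}). The only substantive difference is that your converse is more explicit: the paper compresses that direction into a single sentence (effectively deferring to \cite{arnborg1987complexity}), whereas you sketch the $P$-first PEO and block-contiguity exchange arguments; note, however, that the $P$-first reordering is justified by $Q$ being a clique (so some PEO ends in $Q$ and every PEO yields the same maximal cliques), not by $P$ being a clique as you wrote.
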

\begin{proof}
$\hat{G}=(P\sqcup Q,\hat{E})$ is the bipartite graph constructed from $G$ using steps A1-A3 \& \^{L}4 
for a given node $v \in V$. Let $\pi_P$ be an ordering of $P$ s.t. for any node $u_i =\alpha(i) \in V$, the corresponding node $u_i^j \in P$ has order 
\begin{equation}
\label{eq:treewdith_pi_p}
    \pi_P^{-1}(u_i^j)=(\Delta(G)+1)\times i-j+1,    
\end{equation}
where $j \in [1, \Delta(G)+1]$. Furthermore, let $\beta$ be the reverse order of $\pi_P$. Then steps (a) and (b) specify a set $F_{\hat{G}}(\beta)$ of fill in edges w.r.t. $\beta$ to triangulate $C(\hat{G})$, because the sets of neighbours of $P$'s nodes in $\hat{G}+F_{\hat{G}}(\beta)$ form the chain $N(\beta(|P|)) \subseteq \dots \subseteq N(\beta(1))$. By Lemma \ref{lm:reverse_chain_ord}, for any ordering $\pi_Q$ of $Q$, the ordering $\{\pi_P,\pi_Q\}$ is a PEO of the triangulated graph $T=C(\hat{G})+F_{\hat{G}}(\beta)$. For each $i \in [1,(\Delta(G)+1)\times |V|]$, the node $\pi_P(i)$ and its neighbours in the elimination graph $T^{i-1}$ form a clique $K^i$. By going through $\{\pi_P,\pi_Q\}$, it produces a list of cliques that include all maximal cliques in $T$ and consequently the maximum clique. Since each $v^j$ is a saturated node in $P$, all maximal cliques correspond to nodes in $P$ only. 

To calculate the size of each corresponding maximal clique, consider the node $u_i^1$ in the elimination graph w.r.t. $\pi_P$ by removing from $T$ the initial $(\Delta(G)+1)\times (i-1)$ nodes in $P$. The partition completion $C(\hat{G})$ connects $u_i^1$ to $\Delta(G)$ nodes correspond to $u_i$ and $\Delta(G)+1$ nodes correspond to the remaining $|V|-i$ nodes in $V$. In addition, the edge set $\hat{E}$ and the fill in edges $F_{\hat{G}}(\beta)$ connects $u_i^1$ to $\Delta(G)+1-d_G(u_l)$ residual nodes in each $R(u_l)$ for $\sigma(u_l) \ge \alpha^{-1}(u_i)$ and the two edge nodes for each edge $e \in E$ for $\sigma(e) \ge \alpha^{-1}(u_i)$. Let $e = xy \in E$ and assume without loss of generality that $\alpha^{-1}(x) > \alpha^{-1}(y)$. Then define $E_1^i = \{xy \in E \mid \alpha^{-1}(x) \le i < \alpha^{-1}(y)\}$ and $E_2^i = \{xy \in E \mid \alpha^{-1}(y) \le i\}$. The degree of $u_i^j$ in the corresponding elimination graph can be calculated by
\begin{align}
\label{eq:max_clique_linear_cut}
d(u_i^j) = &\Delta(G)+\left[(\Delta(G)+1)\times (|V|-i)\right]+ \nonumber \\
&\left[(\Delta(G)+1) \times i-\sum_{k=1}^i d_G(u_k)\right]+ \nonumber \\
& 2|E_1^i|+2|E_2^i| \nonumber \\ 
=& (\Delta(G)+1)\times (|V|+1)-1+|E_1^i|,
\end{align}
because $\sum_{k=1}^i d_G(u_k)=|E_1^i|+2|E_2^i|$. Note that the reason of having $\Delta(G)+1$ copies of each node in $P$ and two edge nodes for each edge in $G$ is to cancel the terms containing $i$ and $E_2^i$ in the final answer. 

It is obvious that $\max\{E_1^i\mid i \in [1,|V|]\}$ is the linear cut value of $G$. 
If an ordering $\alpha$ gives the \textit{Yes} answer to the restricted MCLA problem of a graph $G$, the treewidth of the corresponding moral graph $C(\hat{G})$ is equal to $\omega$ when triangulating it w.r.t. the ordering $\{\pi_P, \pi_Q\}$, where $\pi_P$ is generated according to $\alpha$ by equation (\ref{eq:treewdith_pi_p}). Conversely, if the treewidth of the moral graph $C(\hat{G})$ is $\omega$ w.r.t. the ordering $\{\pi_P, \pi_Q\}$, the minimum linear cut value of $G$ is $k$ w.r.t. the ordering $\pi_P$ induced over $V$. 
\end{proof}

\begin{theorem}
The treewidth problem for moral graphs is NP-complete. 
\end{theorem}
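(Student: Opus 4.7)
The plan is to follow exactly the two-part template used for the minimum fill-in theorem: establish membership in NP by exhibiting a polynomial-time verifier, and then obtain NP-hardness by chaining the polynomial construction of $C(\hat{G})$ (steps A1--A3 $\&$ \^{L}4) with the equivalence of yes-instances proved in Lemma \ref{lm:treewdith_nphard}.

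For NP membership, I would take as a certificate an elimination ordering $\alpha$ of $C(\hat{G})$. Computing $H_{C(\hat{G})}(\alpha)$ and inspecting the size of the largest clique generated when each $\alpha(i)$ is eliminated is polynomial in $|V(C(\hat{G}))|$, so verifying that the width is at most $\omega$ takes polynomial time. This places the treewidth problem for moral graphs in NP.

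For NP-hardness, I would reduce from the restricted MCLA problem, which is NP-complete after fixing an arbitrary $v\in V$ at the last position of the ordering (this restriction remains NP-complete by the argument given after the definitions of OLA, MCLA, and EDS). Given an MCLA instance $(G,k)$ together with the chosen $v$, apply steps A1--A3 and \^{L}4 to construct $\hat{G}$, then form $C(\hat{G})$. The construction is clearly polynomial, since \^{L}4 adds at most $(\Delta(G)+1)\cdot |Q|$ edges and $|Q|$ is polynomial in $|V|+|E|$. By Lemma \ref{lm:make_cg'_moral_arnborg}, $C(\hat{G})$ is a moral graph, hence a legitimate instance of the decision problem. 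Setting $\omega = (\Delta(G)+1)(|V|+1)+k$, Lemma \ref{lm:treewdith_nphard} gives precisely the required equivalence between yes-instances.

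The main obstacle has already been absorbed into the preceding lemmas: Lemma \ref{lm:make_cg'_moral_arnborg} handles the subtle morality condition (the original Arnborg construction produces a non-moral graph, which is exactly the gap the paper is fixing), and Lemma \ref{lm:treewdith_nphard} carries out the clique-size bookkeeping that shows why the specific multiplicity $\Delta(G)+1$ of copies in $P$ and the pair of edge-nodes in $Q$ make the residual formula in \eqref{eq:max_clique_linear_cut} collapse to a constant plus $|E_1^i|$. With those facts in hand, the theorem itself is just the stitching step, requiring no further calculation.
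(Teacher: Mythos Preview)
Your proposal is correct and matches the paper's approach essentially line for line: NP membership via a polynomial-time check of a witness, and NP-hardness by chaining the polynomial construction of $C(\hat{G})$ with Lemma~\ref{lm:make_cg'_moral_arnborg} and Lemma~\ref{lm:treewdith_nphard}. The only cosmetic difference is that the paper takes a fill-in edge set $F$ as the certificate (finding the maximum clique in the chordal graph $G+F$ in polynomial time), whereas you take an elimination ordering; the two are interchangeable and equally valid.
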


\begin{proof}
Let $F$ be a set of fill in edges to triangulate a moral graph $G$. It takes polynomial time to find the maximum clique in $G+F$ and test if it is at most $k$, so the treewidth problem is in NP. Hence, the theorem follows from Lemma \ref{lm:treewdith_nphard} and the polynomial transformation from $G$ to $C(\hat{G})$. 
\end{proof}

\subsection{Total states}

\begin{figure}
\centering
\subfigure[]{
\label{fg:g'_wen}
\begin{tikzpicture}[scale=1]
\begin{scope}[every node/.style={circle,draw,fill=black,minimum size=1mm,inner sep=1pt},>={Stealth[black]}]
    \node (A) at (0,0.5) {};
    \node[label=left:$c$] (C) at (-1,-1) {};
    \node (B) at (1,-1) {};
    \node (D) at (-1,-2.5) {};          
    \node[fill=white] (E11) at (0.7,-0.25) {};
    \node[fill=white] (E21) at (-0.7,-0.25) {};
    \node[fill=white] (E31) at (0,-1.2) {};
    \node[fill=white] (E41) at (-1.2,-1.75) {};
 
    \path [-] (A) edge (E11);
    \path [-] (B) edge (E11);
    \path [-] (A) edge (E21);
    \path [-] (C) edge (E21);
    \path [-] (B) edge (E31);
    \path [-] (C) edge (E31);
    \path [-] (D) edge (E41);
    \path [-] (C) edge (E41);
\end{scope}
\end{tikzpicture}
}
\subfigure[]{
\label{fg:g_hat_wen}
\begin{tikzpicture}[scale=1]
\begin{scope}[every node/.style={circle,draw,fill=black,minimum size=1mm,inner sep=1pt},>={Stealth[black]}]
    \node (A) at (0,0.5) {};
    \node[fill=white] (a1) at (-0.2,1) {};
    \node[fill=white] (a2) at (0.2,1) {};    
    \node (C) at (-1,-1) {};
    \node[fill=white,label=left:$r_c^1$] (c1) at (-1.5,-1) {};    
    \node (B) at (1,-1) {};
    \node[fill=white] (b1) at (1.5,-1.2) {};
    \node[fill=white] (b2) at (1.5,-0.8) {};
    \node (D) at (-1,-2.5) {};
    \node[fill=white] (d1) at (-1.3,-3) {};    
    \node[fill=white] (d2) at (-1,-3) {};    
    \node[fill=white] (d3) at (-0.7,-3) {};            
    \node[fill=white] (E11) at (0.7,-0.25) {};
    \node[fill=white] (E21) at (-0.7,-0.25) {};
    \node[fill=white] (E31) at (0,-1.2) {};
    \node[fill=white] (E41) at (-1.2,-1.75) {};
    \path [-] (A) edge (a1);
    \path [-] (A) edge (a2);  
    \path [-] (B) edge (b1);
    \path [-] (B) edge (b2);  
    \path [-] (C) edge (c1);  
    \path [-] (D) edge (d1); 
    \path [-] (D) edge (d2); 
    \path [-] (D) edge (d3);   
    \path [-] (A) edge (E11);
    \path [-] (B) edge (E11);
    \path [-] (A) edge (E21);
    \path [-] (C) edge (E21);
    \path [-] (B) edge (E31);
    \path [-] (C) edge (E31);
    \path [-] (D) edge (E41);
    \path [-] (C) edge (E41);
    
    \path [densely dashed,bend right,blue] (a1) edge (C);
    \path [densely dashed,bend left,blue] (a2) edge (C);   
    \path [densely dashed,blue] (b1) edge (C);
    \path [densely dashed,blue] (b2) edge (C); 
    \path [densely dashed,bend left,blue] (d1) edge (C);
    \path [densely dashed,bend right,blue] (d2) edge (C);
    \path [densely dashed,bend right,blue] (d3) edge (C);
    \path [densely dashed,blue] (E11) edge (C);
\end{scope}
\end{tikzpicture}
}

\subfigure[]{
\label{fg:g_hat_wrs_sub_wen}
\begin{tikzpicture}[scale=1]
\begin{scope}[every node/.style={circle,draw,fill=black,minimum size=1mm,inner sep=1pt},>={Stealth[black]}]
    \node (A) at (0,0.5) {};
    \node[fill=white] (a1) at (-0.2,1) {};
    \node[fill=white] (a2) at (0.2,1) {};    
    \node (C) at (-1,-1) {};    
    \node (B) at (1,-1) {};
    \node[fill=white] (b1) at (1.5,-1.2) {};
    \node[fill=white] (b2) at (1.5,-0.8) {};
    \node (D) at (-1,-2.5) {};
    \node[fill=white] (d1) at (-1.3,-3) {};    
    \node[fill=white] (d2) at (-1,-3) {};    
    \node[fill=white] (d3) at (-0.7,-3) {};            
    \node[fill=white] (E11) at (0.7,-0.25) {};
    \node[fill=white] (E21) at (-0.7,-0.25) {};
    \node[fill=white] (E31) at (0,-1.2) {};
    \node[fill=white] (E41) at (-1.2,-1.75) {};
    \path [densely dashed] (A) edge (B);
    \path [densely dashed] (A) edge (C);
    \path [densely dashed] (A) edge (D);    
    \path [densely dashed] (B) edge (C);
    \path [densely dashed] (B) edge (D);    
    \path [densely dashed] (C) edge (D);
    \path [-] (A) edge (a1);
    \path [-] (A) edge (a2);  
    \path [-] (B) edge (b1);
    \path [-] (B) edge (b2);    
    \path [-] (D) edge (d1); 
    \path [-] (D) edge (d2); 
    \path [-] (D) edge (d3);   
    \path [-] (A) edge (E11);
    \path [-] (B) edge (E11); 
    \path [-] (A) edge (E21);
    \path [-] (C) edge (E21);
    \path [-] (B) edge (E31);
    \path [-] (C) edge (E31);
    \path [-] (D) edge (E41);
    \path [-] (C) edge (E41);
\end{scope}
\end{tikzpicture}
}
\caption{\subref{fg:g'_wen} the bipartite graph $G'$ transformed from $G$ (Figure \ref{fg:g}) by steps W1-W3; \subref{fg:g_hat_wen} the bipartite graph $\hat{G}$ transformed from $G$ by steps W1 \& L2-L4 for a given node $c$;  \subref{fg:g_hat_wrs_sub_wen} the subgraph obtained from $C(\hat{G})$ by removing a simplicial node $r_c^1$ and its excess (specified in Lemma \ref{lm:make_cg'_moral_wen}).}
\label{fg:wen_reduction}
\end{figure}
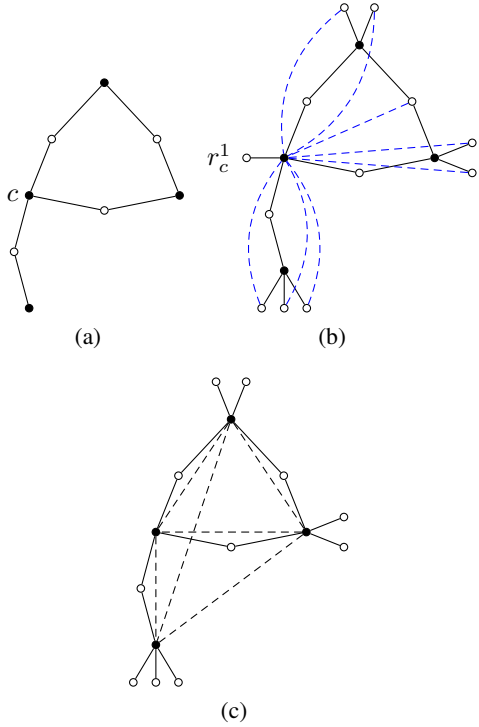
By taking into account the number of states per variable, \cite{wen1990optimal} proposed the total states problem for moral graphs and presented a proof for its difficulty by polynomially reducing the EDS problem to it. His transformation is rather simpler than the previous two cases by making one copy of the nodes in $G$ and one edge node for each edge in $G$, without creating residual nodes (Figure \ref{fg:g'_wen}). The detailed transformation from a graph $G$ (Figure \ref{fg:g}) that is an instance of the EDS problem to a bipartite graph $G'$ is stated in the following steps: 
\begin{enumerate}[label=W\arabic*.]
    \item $P = \{u \mid u \in V\}$,
    \item $Q=\{e_i^1 \mid e_i \in E\}$, 
    \item $E'=\{ue_i^1 \mid e_i \in E \text{ s.t. } u \in V(e_i)\}$. 
\end{enumerate}
It then follows by applying the partition completion on $G'$ to transform it to $C(G')$, which is an instance of the total states problem for graphs but not for moral graphs. The resulting graph $C(G')$, however, encounters the same problem of not satisfying a necessary condition of being moral. Therefore, \cite{wen1990optimal}'s reduction does not prove that the total states problem for moral graphs is NP-complete. 

To prove the EDS problem is reducible to the total states problem for moral graphs in polynomial time, W2 and W3 are replaced by the following two steps 
\begin{enumerate}[label=L\arabic*.]
	\setcounter{enumi}{1}
	\item $Q=\{e_i^1 \mid e_i \in E\} \cup \{R(u) \mid u \in V\}$, where $R(u) = \{r_u^j \mid j \in \{1, \dots, \Delta(G)+1-d_G(u)\}\}$, 
    \item $E'=\{ue_i^1 \mid e_i \in E \text{ s.t. } u \in V(e_i)\}\cup \{uv \mid v \in R(u), u \in P\}$,
\end{enumerate}
to create residual nodes before applying the same step L4 (as in Section \ref{subsec:mini_fillin}) for a given node to get the bipartite graph $\hat{G}$ (Figure \ref{fg:g_hat_wen}).

\begin{lemma}
\label{lm:make_cg'_moral_wen}
Let $\hat{G}=(P\sqcup Q,\hat{E})$ be the bipartite graph constructed from a graph $G=(V,E)$ by steps W1 \& L2-L4. Then $C(\hat{G})$ is moral. 
\end{lemma}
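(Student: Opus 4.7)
The plan is to replay the argument of Lemma \ref{lm:make_cg'_moral} essentially verbatim, since the construction W1 \& L2-L4 differs from Y1-Y3 \& L4 only in that each edge $e_i \in E$ contributes a single edge node $e_i^1$ rather than two edge nodes $e_i^1, e_i^2$. This modification changes the size of $Q$ but leaves untouched the structural features on which the morality argument relies: $P$ and $Q$ are still cliques after partition completion, and step L4 still saturates a chosen node $w \in V$.

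First, I would identify $r_w^1$ as a candidate simplicial node in $C(\hat{G})$. By step L4 the node $w$ is adjacent in $\hat{G}$ to every vertex of $Q$, and the partition completion makes $Q$ a clique. Therefore the closed neighbourhood of $r_w^1$ in $C(\hat{G})$ is $\{w\} \cup (Q \setminus \{r_w^1\})$, and this set induces a clique (since $w$ is adjacent to all of $Q \setminus \{r_w^1\}$, $Q \setminus \{r_w^1\}$ is itself a clique, and $w \in P$). Hence $r_w^1$ is simplicial.

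Next, I would invoke Definition \ref{def:wrs} with the excess $\epsilon(r_w^1) = S(w) \cup \{uv \mid u,v \in Q \setminus \{r_w^1\}\}$, exactly as in Lemma \ref{lm:make_cg'_moral}. Removing $r_w^1$ together with this excess strips away precisely the edges added by L4 and the $Q$-side of the partition completion, while leaving the original bipartite edges of $G'$ and the $P$-side completion intact. The resulting subgraph (illustrated in Figure \ref{fg:g_hat_wrs_sub_wen}) therefore coincides with $C_p(G' - r_w^1)$.

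Finally, Lemma \ref{lm:cp_bip_chordal} gives that $C_p(G' - r_w^1)$ is chordal, Corollary \ref{cor:chordal_implies_moral} then says it is moral, and Theorem \ref{thm:equivalent} gives that it is WRS. Consequently $C(\hat{G})$ is itself WRS by Definition \ref{def:wrs}, and a final application of Theorem \ref{thm:equivalent} concludes that $C(\hat{G})$ is moral. The only real point to verify is that the single-edge-node variant of L2 does not spoil the clique structure of the neighbourhood of $r_w^1$; this is immediate since the partition completion still makes $Q$ a clique regardless of how many edge nodes it contains. No part of the argument genuinely depends on the multiplicity of edge nodes, so no new difficulty arises.
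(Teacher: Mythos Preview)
Your proposal is correct and follows exactly the same approach as the paper's proof: identify $r_w^1$ as simplicial in $C(\hat{G})$, remove it with the excess $\epsilon(r_w^1)=S(w)\cup\{uv \mid u,v \in Q,\ u,v\neq r_w^1\}$, observe that the remaining graph is $C_p(G'-r_w^1)$, and then invoke Lemma~\ref{lm:cp_bip_chordal}, Corollary~\ref{cor:chordal_implies_moral} and Theorem~\ref{thm:equivalent}. The paper's version is simply more terse, delegating the details to Lemma~\ref{lm:make_cg'_moral}; one trivial slip in your write-up is that $\{w\}\cup(Q\setminus\{r_w^1\})$ is the \emph{open} neighbourhood of $r_w^1$, not the closed one, but this is immaterial to the argument.
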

\begin{proof}
$r_c^1$ is the simplicial node in $C(\hat{G})$. The excess removed with it is $\epsilon(r_c^1)=S(c) \cup \{uv \mid u,v \in Q \text{ s.t. } u,v \neq r_c^1\}$. The rest of the proof is the same as Lemma \ref{lm:make_cg'_moral}. 
\end{proof}

For simplicity, define $N(i) = |\{u \mid u\alpha(i) \in E \text{ s.t. } \alpha^{-1}(u) > i\}|$. The following lemma proves the hardness of the total states problem for moral graphs by polynomially reducing it from the restricted EDS problem. 

\begin{lemma}
\label{lm:total_states}
Given a graph $G=(V,E)$ and a sequence of non-negative integers $<d_1,\dots,d_{|V|}>$ not exceeding $|V|-1$, for any node $w \in V$ each value in the sequence satisfies $d_i=N(i)$ w.r.t. an ordering $\alpha$ of $G$ with $\alpha^{-1}(w)=|V|$ if and only if the corresponding moral graph $C(\hat{G})$ has the total number of states equal to $\delta=\sum_{i=1}^{|V|} |V| + \Delta(G) \times i + 1 + \sum_{j=1}^i [d_j - d_G(\alpha(j))]$.
\end{lemma}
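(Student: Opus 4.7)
The plan is to mirror Lemma \ref{lm:mini_chain_fillin_npc}: an ordering $\alpha$ of $V$ with $\alpha^{-1}(w)=|V|$ induces a chain-completion triangulation of $C(\hat{G})$, whose maximal clique sizes I will enumerate in closed form.

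From $\alpha$ I would define the reverse chain ordering $\beta$ of $P$ by $\beta(i)=\alpha(|V|+1-i)$, placing $w$ first in $\beta$ (the saturated position). The same two-step procedure (a)--(b) from the proof of Lemma \ref{lm:mini_chain_fillin_npc} produces a minimal set $F_{\hat{G}}(\beta)$ of fill-in edges making $\hat{G}+F_{\hat{G}}(\beta)$ a chain graph, so $T=C(\hat{G})+F_{\hat{G}}(\beta)$ is chordal by Lemma \ref{lm:yannakakis_chain_chordal}. By Lemma \ref{lm:reverse_chain_ord}, $\alpha$ followed by any ordering of $Q$ is a PEO of $T$, and the maximal cliques of $T$ are exactly $K^{i}=\{\alpha(i)\}\cup N_{T^{i-1}}(\alpha(i))$ for $i\in[1,|V|]$: each maximal clique of $T$ contains at least one $P$-vertex (the $Q$-only clique is absorbed into $\{w\}\cup Q=K^{|V|}$ through the saturation of $w$), and its minimum-$\alpha$-indexed $P$-vertex $\alpha(i^{*})$ forces the clique to equal $K^{i^{*}}$.

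Next I would count $|K^{i}|$ by summing three disjoint contributions: the remaining $P$-vertices $\{\alpha(i),\ldots,\alpha(|V|)\}$ give $|V|-i+1$; each edge $e=xy\in E$ with $\min\{\alpha^{-1}(x),\alpha^{-1}(y)\}\leq i$ attaches its edge node $e^{1}$, totalling $\sum_{j=1}^{i}N(j)$ by the same edge-counting identity as in Lemma \ref{lm:mini_chain_fillin_npc}; and the residual nodes $\bigcup_{j=1}^{i}R(\alpha(j))$ contribute $\sum_{j=1}^{i}(\Delta(G)+1-d_{G}(\alpha(j)))$. Simplifying yields
\begin{equation*}
|K^{i}|=|V|+\Delta(G)\cdot i+1+\sum_{j=1}^{i}[N(j)-d_{G}(\alpha(j))],
\end{equation*}
so the total states of $T$ equals $\sum_{i=1}^{|V|}|K^{i}|$, which coincides with $\delta$ term-by-term exactly when $d_{j}=N(j)$ for every $j$.

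The forward direction then follows by direct substitution. For the converse, the triangulation's per-clique sizes are determined by $\alpha$ and match the corresponding summands of $\delta$; taking consecutive differences $|K^{i}|-|K^{i-1}|=\Delta(G)+N(i)-d_{G}(\alpha(i))$ and comparing with $\Delta(G)+d_{i}-d_{G}(\alpha(i))$ forces $d_{i}=N(i)$ for each $i$. The main obstacle is justifying the per-clique decomposition of the total states --- specifically, that the $|V|$ distinct cliques $K^{1},\ldots,K^{|V|}$ exhaust the maximal cliques of $T$ without overlap or omission, which is where the chain property (the ascending $Q$-neighbourhoods of the $\alpha(i)$'s in $i$) and the saturation of $w$ (guaranteeing the terminal clique absorbs any $Q$-only candidate during the elimination of $Q$) are both essential.
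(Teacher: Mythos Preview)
Your proposal is essentially the paper's own argument: build the chain-completion triangulation $T$ from the reversed ordering $\beta$, invoke Lemma~\ref{lm:reverse_chain_ord} to obtain a PEO, identify the maximal cliques as the closed neighbourhoods $K^{i}$ of the $P$-nodes in the elimination sequence, and decompose $|K^{i}|$ into the same three contributions (remaining $P$-vertices, edge nodes, residual nodes) to reach the displayed formula. The paper's converse is even terser than yours---it simply asserts that the equality of totals forces $d_i=N(i)$ via the clique-size formula---and it adds the remark that with binary variables the total states is really $\sum_{i}2^{k_i}$, a point the lemma statement (and your write-up) elide.
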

\begin{proof}
$\hat{G}=(P\sqcup Q,\hat{E})$ is the bipartite graph constructed from $G$ using the steps W1 \& L2-L4. 
Let $\beta$ be the reverse order of $\alpha$. According to the steps (a) and (b), $\beta$ specifies a set $F_{\hat{G}}(\beta)$ of fill in edges to make $\hat{G}$ a chain graph. Hence, $T=C(\hat{G})+F_{\hat{G}}(\beta)$ is a triangulated graph. By Lemma \ref{lm:reverse_chain_ord}, for any ordering $\alpha_Q$ of $Q$ the ordering $\{\alpha, \alpha_Q\}$ is a PEO of $T$. As stated in the proof of Lemma \ref{lm:treewdith_nphard}, the maximal cliques of $T$ only correspond to nodes in $P$, so the rest of this proof does not consider the degrees of the nodes in $Q$. 

For $i \in [1,|V|]$, $N_{T^{i-1}}(\alpha(i))=N_P\cup N_Q$, where $N_P=\{u \in P \mid \alpha^{-1}(u) > i\}$ and $N_Q=\cup_{j=1}^i N_{T^{j-1}|_Q}(\alpha(j))$ because $\hat{G}+F_{\hat{G}}(\beta)$ is a chain graph. The cardinality of $N_P$ can be easily calculated by $|V|-i+1$. The set $N_Q$ consists of the union of the neighbours of $\alpha(j)$ restricted to $Q$ in the elimination graph $T^{j-1}$ for all $j \in [1,i]$. The restricted neighbour set $N_{T^{j-1}|_Q}(\alpha(j))$ contains $N(j)$ edge nodes incident to $\alpha(j)$ because there is exactly one edge node for each edge in $G$, and $\Delta(G)+1-d_G(\alpha(j))$ residual nodes of $\alpha(j)$. So the total size $|N_Q|=\sum_{j=1}^i N(j)+\Delta(G)+1-d_G(\alpha(j))$. To show $N_P \cup N_Q$ forms a clique, it is easy to see that each of these subsets is a clique because of the partition completion. For any $u \in N_P$, the condition $\alpha^{-1}(u) > i$ implies $\beta^{-1}(u) < i$. Step (a) implies that $\sigma(v) \ge i$ for any $v \in N_Q$. It follows that $\sigma(v) > \beta^{-1}(u)$ for any $u \in N_P$ and $v \in N_Q$, so each node in $N_P$ is connected to each node in $N_Q$ by step (b). Therefore, the closed neighbourhood $N_{T^{i-1}}[\alpha(i)]=N_{T^{i-1}}(\alpha(i)) \cup \{\alpha(i)\}$ is a clique in $T$. It is in fact a maximal clique, because the set $R(\alpha(i))$ is not incident to $\alpha(i-1)$ for $i \in [2, |V|]$. The size $k_i$ of the maximal clique corresponds to $\alpha(i)$ is thus 
\begin{align}
\label{eq:total_states_ki}
    k_i &=|V|-i+1+\sum_{j=1}^i [N(j) + \Delta(G)+1-d_G(\alpha(j))] \nonumber \\
    &= |V| + \Delta(G) \times i + 1 + \sum_{j=1}^i [N(j) - d_G(\alpha(j))].
\end{align}
Assume that all variables considered are binary, the total number of states summing over all $|V|$ maximal cliques is $\sum_{i=1}^{|V|} 2^{k_i}$. 

If there exists an ordering $\alpha$ of $G$, w.r.t. which the EDS answer is \textit{Yes}, substituting $N(i)$ by $d_i$ in equation (\ref{eq:total_states_ki}) entails that the total states of $C(\hat{G})$ is $\delta$ w.r.t. $\alpha$. That is, $\alpha$ is a \textit{Yes} answer to the total states problem for the corresponding moral graph $C(\hat{G})$. Conversely, if the answer to the total states problem for a moral graph $C(\hat{G})$ is \textit{Yes} w.r.t. an ordering $\{\alpha, \alpha_Q\}$, it follows from equation (\ref{eq:total_states_ki}) that $d_i=N(i)$, so $\alpha$ gives a \textit{Yes} answer to the EDS problem for the graph $G$. 
\end{proof}

\begin{theorem}
The total states problem for moral graphs is NP-complete. 
\end{theorem}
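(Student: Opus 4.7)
The plan is to mirror the two preceding NP-completeness proofs in this paper, since the structural work has already been done in the lemmas of this subsection. First I would verify membership in NP: given a set $F$ of fill-in edges, one can triangulate $C(\hat{G})$ in polynomial time, enumerate the maximal cliques of $C(\hat{G})+F$ (for instance by running any known PEO-based clique-tree algorithm on the chordal result), compute $\sum_i 2^{|K_i|}$ under the binary-variable convention adopted in Lemma \ref{lm:total_states}, and test whether that total is at most $\delta$. Each step is polynomial in $|V(C(\hat{G}))|$, so the total states problem for moral graphs is in NP.

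For NP-hardness, I would appeal to the reduction already assembled here. Given an EDS instance $(G,\langle d_1,\dots,d_{|V|}\rangle)$ and a distinguished node $w\in V$, steps W1 and L2--L4 followed by the partition completion produce $C(\hat{G})$ in time polynomial in $|V|$ and $|E|$: L2 creates at most $|V|(\Delta(G)+1)$ residual nodes, W1 and L3 contribute $O(|V|+|E|)$ vertices and edges, and L4 adds $O(|Q|)$ edges to saturate $w$. Lemma \ref{lm:make_cg'_moral_wen} then certifies that $C(\hat{G})$ is moral, which is exactly the ingredient missing from \cite{wen1990optimal}'s original argument. Setting $\delta$ as prescribed in Lemma \ref{lm:total_states}, the biconditional of that lemma shows that the restricted EDS instance is a \emph{Yes} instance if and only if $C(\hat{G})$ admits a triangulation with total states at most $\delta$. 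Combined with the fact (noted just before Lemma \ref{lm:yannakakis_chain_chordal}) that the restricted EDS problem remains NP-complete, this establishes NP-hardness.

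The main obstacle has already been discharged by Lemmas \ref{lm:make_cg'_moral_wen} and \ref{lm:total_states}, so for the theorem itself only the above bookkeeping remains. The one subtle point worth flagging is that Lemma \ref{lm:total_states} is stated under the binary-variable convention, so $\delta$ is expressed through the exponents $k_i$ rather than a mixed product of state counts; since the binary case already encodes the NP-hard kernel, this restriction suffices to conclude NP-hardness of the general decision problem as stated in the introduction, without any further modification to the reduction.
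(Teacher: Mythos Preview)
Your proposal is correct and follows essentially the same approach as the paper: membership in NP via polynomial-time verification of a fill-in certificate (enumerating maximal cliques of the resulting chordal graph and summing their state counts), and NP-hardness by invoking the polynomial transformation W1 \& L2--L4, Lemma~\ref{lm:make_cg'_moral_wen} for morality, and Lemma~\ref{lm:total_states} for the equivalence with restricted EDS. The paper's own proof is terser and omits your explicit remarks on the size of the construction and the binary-variable convention, but the logical skeleton is identical.
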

\begin{proof}
Since the maximal cliques of a triangulated graph can be found in polynomial time, the total states of any triangulated graph can be verified in polynomial time to be greater than $\delta$ or not. Hence, the problem is in NP. 
Given the moral graph $C(\hat{G})$ can be transformed from a graph $G$ by the steps W1 \& L2-L4 in polynomial time, Lemma \ref{lm:total_states} proves the NP-hardness the total states problem for moral graphs. 
\end{proof}

\section{Conclusion}
Optimal moral graph triangulation plays an important role in determining the computational complexity of the junction tree algorithm for belief propagation on Bayesian networks. The minimum number of fill in edges indirectly but closedly related to the maximum clique size of a triangulated moral graph. The treewidth of a moral graph directly determines the efficiency of the junction tree algorithm when computing probabilities of unobserved variables by marginlizing out observed variables in the largest clique. The total number of states when summing over all maximal cliques in a triangulated moral graph takes into account the number of states per variables. The optimal moral graph triangulation with the objective of minimizing the number of fill in edges, the maximum clique size or the total number of states has proved to be NP-complete in this paper. Therefore, this paper clears the matter that optimal moral graph triangulation with the above constraints was previously proved to be NP-complete. 

\newpage 
\bibliographystyle{abbrvnat}
\bibliography{references}

\end{document}